\newtheorem{mythm}{Theorem}
\title{Joint Transmit Beamforming for Multiuser MIMO Communications and Radar
}
\author{ 
	Xiang  Liu, Tianyao Huang, Nir Shlezinger, Yimin Liu, Jie Zhou,  and Yonina C. Eldar
	\thanks{
		This work was supported by the National Natural Science Foundation of
		China under Grant 61801258 and 61571260, Futurewei Technologies, and the Air Force Office of Scientific Research under grant No. FA9550-18-1-0208.		
		Part of the work \cite{mywork} were presented at 2019 IEEE International Conference on Signal, Information and Data Processing.	
		X. Liu, T. Huang and Y. Liu are with the Department of Electronic Engineering, Tsinghua University, Beijing, China (e-mail: liuxiang16@mails.tsinghua.edu.cn, \{huangtianyao; yiminliu\}@tsinghua.edu.cn). 		
		J. Zhou is with the Institute of Electronic Engineering, China Academy of Engineering Physics, Mianyang, China (e-mail: zhoujie\_iee@yeah.net).		
		N. Shlezinger  and Y. C. Eldar are with the Faculty of Mathematics and Computer Science, Weizmann Institute of Science, Rehovot, Israel (e-mail: \{nir.shlezinger; yonina\}@weizmann.ac.il). 	
		
	}
}
\begin{document}
	\maketitle
	\begin{abstract}
		Future wireless communication systems are expected to explore spectral bands typically used by radar systems, in order to overcome spectrum congestion of traditional communication bands.  Since in many applications radar and communication share the same platform, spectrum sharing can be facilitated by joint design as dual function radar-communications system.
		%
		In this paper, we propose a joint transmit beamforming model for a dual-function multiple-input-multiple-output (MIMO) radar and multiuser MIMO communication transmitter sharing the spectrum and an antenna array.
		The proposed dual-function system transmits the weighted sum of independent radar waveform and communication symbols, forming multiple beams towards the radar targets and the communication receivers, respectively.
		The design of the weighting coefficients is formulated as an optimization problem whose objective is the performance of the MIMO radar transmit beamforming, while guaranteeing that the signal-to-interference-plus-noise ratio (SINR)  at each communication user is higher than a given threshold.
		Despite the non-convexity of the proposed optimization problem,  it can be relaxed into a convex one, which can be solved in polynomial time, and we prove that the relaxation is tight. Then, we propose a reduced complexity design based on  zero-forcing the inter-user interference and radar interference.
		Unlike previous works, which focused on the transmission of communication symbols to synthesize a radar transmit beam pattern, our method provides more degrees of freedom for MIMO radar and is thus able to obtain improved radar performance, as demonstrated in our simulation study.
		Furthermore, the proposed dual-function scheme approaches the radar performance of the radar-only scheme, i.e., without spectrum sharing, under reasonable communication quality constraints. 
	\end{abstract}

	\begin{IEEEkeywords}
		Spectrum sharing, dual-function radar communication, MIMO radar, multiuser MIMO, transmit beamforming
	\end{IEEEkeywords}
	
	\section{Introduction}
	The increasing  demands on wireless communications networks give rise to a growing need for spectrum sharing between radar and communication systems. 
	Nowadays, military radars utilize numerous spectrum bands below 10 GHz, like S-band (2-4 GHz) and C-band (4-8 GHz), while spectrum congestion is becoming a serious problem which limits the throughput of wireless communications operating in neighbouring bands. To tackle this congestion, it has  been recently proposed to allow wireless communications to share spectrum with radar systems, allowing both functionalities to simultaneously operate over the same wide frequency bands \cite{paul_survey_2017-1, darpa, noauthor_fcc_nodate, zheng2019radar}.
	
	The common strategy to allow individual radar and communication systems to share spectrum with controllable mutual interference is to facilitate co-existence by some level of cooperation \cite{6167362, 6331681, sodagari_projection_2012, mahal_spectral_2017, deng_interference_2013, meager_estimation_2016,nartasilpa_lets_2018,nartasilpa_communications_2018, rihan_optimum_2018, qian_joint_2018, li_optimum_2016-1, li_joint_2017, liu_mimo_2018}. 
	These techniques include opportunistic spectrum access \cite{6167362, 6331681}, transmit interference nulling \cite{sodagari_projection_2012, mahal_spectral_2017}, adaptive receive interference cancellation \cite{deng_interference_2013, meager_estimation_2016,nartasilpa_lets_2018,nartasilpa_communications_2018} and optimization based beamforming design \cite{rihan_optimum_2018, qian_joint_2018, li_optimum_2016-1, li_joint_2017, liu_mimo_2018} to mitigate the mutual interference.
	This approach typically requires the individual   radar and communication systems to either be configured using some centralized entity, or alternatively, to exchange information, such as knowledge of the interference channel and radar waveform parameters, significantly increasing the complexity of realizing such  systems \cite{liu_toward_2018}.
	
	The difficulty associated with coordinating spectrum sharing radar and communication systems is notably reduced when these functionalities operate on the same device. In fact, various emerging technologies, such as automotive vehicles \cite{ma2019joint}, implement both radar sensing and data transmission from the same platform, as illustrated in Fig.~\ref{fig:function}. In such cases, spectrum sharing can be realized by jointly designing a  dual-function radar-communications (DFRC) system \cite{7409935, ma2019joint, sturm_novel_2009, sturm_ofdm_2009, sturm_waveform_2011, zhang_ofdm_2015, liu_application_2017,sahin_novel_2017, li_integrated_2019, sahin_filter_2017, zhang_modified_2017, ISI:000453637100020, ISI:000453637100011, wang_dual-function_2018,hassanien_dual-function_2016, 7575457, hassanien_dual-function_2016-1, Ma2018, Huang2019multi, Huang2019A, mccormick_simultaneous_2017-1, mccormick_simultaneous_2017,liu_toward_2018, liu_mu-mimo_2018}. 
	One clear advantage of DFRC methods over individual co-existing systems is that the functionalities share radio frequency (RF) front-end and aperture, thus reducing the cost and weight of hardware \cite{kumari_ieee_2018}.
	Moreover, radar and communication are naturally combined in a DFRC system, and no additional cost is required for cooperation. 
	Nonetheless, DFRC design has several associated challenges. From a hardware perspective, the requirements of radar and communications may be quite distinct in terms of, e.g., power amplifiers operation mode \cite{sturm_waveform_2011, liu_toward_2018}.
	From the algorithmic side, properly combining radar and communications is a challenging task, and a broad range of strategies for doing so have been proposed in the literature, see, e.g., the detailed survey in \cite{ma2019joint}.
	

	\begin{figure}
		\centering
		\includegraphics[width=0.9\linewidth]{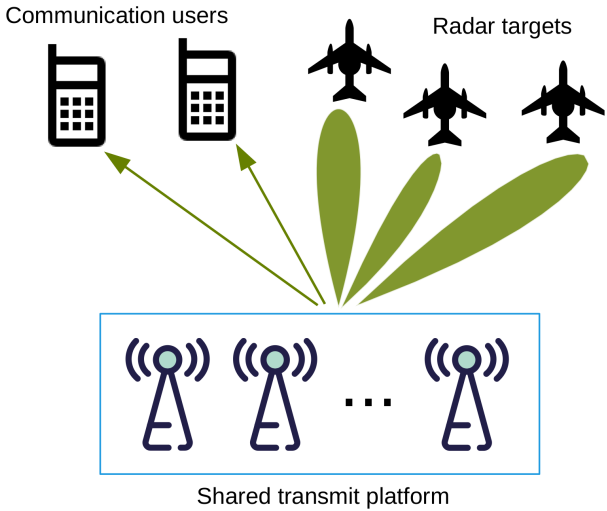}
		\caption{A dual function system in which communication and radar share the transmit platform.} 
		\label{fig:function}
	\end{figure}

	Early works on DFRC systems consider single-antenna devices. 
	One way to implement such spectrum-sharing dual function signaling is by utilizing orthogonal individual signals for radar and communications, as proposed in 
	\cite{7409935} which studied time-division based DFRC systems.
	Alternatively, one can  achieve  both functions simultaneously by employing an appropriate integrated waveform, which can be utilized for both target detection and information transmission.
	For instance, the probing capabilities of orthogonal frequency division multiplexing (OFDM) waveforms, which are widely used for communication signaling, were studied in  \cite{sturm_novel_2009, sturm_ofdm_2009, sturm_waveform_2011, zhang_ofdm_2015, liu_application_2017}.
	The combination of linear frequency modulation (LFM), which is a traditional radar waveform, with continuous phase modulation (CPM) to realize a dual-function signal capable of conveying information was studied \cite{sahin_novel_2017, li_integrated_2019, sahin_filter_2017, zhang_modified_2017}. 
	However, these schemes inherently result in  performance loss for either the radar or the communication \cite{liu_toward_2018}.
	For instance, LFM-CPM usually exhibits higher side-lobes than standard LFM \cite{sahin_novel_2017}.
	Moreover, a common problem emerging in these single-antenna schemes is that radar systems with integrated waveforms usually form a single directional beam, which illuminates the radar target inside the beam.
	Therefore, single-antenna schemes are not able to illuminate multiple targets and communicate with multiple users simultaneously as in Fig.~\ref{fig:function}. That leads to notable degradation in signal-to-noise-ratio (SNR) when the communication receivers are not physically located within the  radar main lobe.

	
	Recognizing this limitation of single-antenna schemes, recent works on DFRC methods employ multiple-input-multiple-output (MIMO) systems, which provide higher degrees of spatial
	freedom, and can simultaneously synthesize multiple beams towards several communication users and radar targets. 
	These  studies can be divided into two categories: information embedding \cite{ISI:000453637100020, ISI:000453637100011, wang_dual-function_2018,hassanien_dual-function_2016, 7575457, hassanien_dual-function_2016-1, Ma2018, Huang2019A} and transmit beamforming \cite{liu_toward_2018, mccormick_simultaneous_2017-1, mccormick_simultaneous_2017, liu_mu-mimo_2018}.
	In information embedding systems, radar is typically considered  as the primary function, and the communication message is encoded into the MIMO radar waveform.
	For example, the works \cite{hassanien_dual-function_2016, 7575457} proposed to embed communication bits by  controlling the amplitude and phase of radar spatial side-lobes.
	The works in \cite{ISI:000453637100020, ISI:000453637100011, wang_dual-function_2018} proposed to convey the message in the form of index modulation, via the  selection of  active radar transmit antennas and the allocation of  radar waveforms across active antennas. 
	Embedding data bits in the parameters of a radar waveform, e.g., phase, antenna index, and frequency, can yield communications in the form of phase modulation \cite{hassanien_dual-function_2016-1}, spatial modulation \cite{Ma2018}, and carrier frequency modulation \cite{Huang2019A}.
	However, such methods carry a very limited number of communication symbols per radar pulse, usually yielding  low information rate of the same order of radar pulse repetition frequency \cite{liu_toward_2018}.
	
	The second approach for implementing MIMO DFRC systems is based on transmit beamforming. Here, the spatial degrees of freedom is exploited to synthesize multiple beams towards several communication users and radar targets. 
	As opposed to information embedding strategies, transmit beamforming  enables each function to use its individual waveform, potentially supporting higher data rates and guaranteed radar performance by utilizing conventional dedicated signals for each functionality  \cite{liu_toward_2018}. 
	In this approach, the main design goal is to properly beamform both the radar and communication signals such that each can operate reliably.
	
	In \cite{mccormick_simultaneous_2017-1, mccormick_simultaneous_2017}, the array probing signal is designed to synthesize radar and communication waveform towards different directions.
	The method in \cite{mccormick_simultaneous_2017-1, mccormick_simultaneous_2017} considers waveform synthesis at the main beam direction but cannot suppress the  azimuth side-lobe of the radar transmit beam pattern.
	The work \cite{liu_toward_2018} extended the method of \cite{mccormick_simultaneous_2017-1, mccormick_simultaneous_2017} to designing the array probing signal to match the radar transmit beam pattern and minimize the interference power at multiple users.
	However, the methods in \cite{mccormick_simultaneous_2017-1, mccormick_simultaneous_2017, liu_toward_2018} only minimize the interference power, and do not consider the signal-to-interference-plus-noise ratio (SINR) at each user, which is the quantity dictating the communications rate. 
	In \cite{liu_mu-mimo_2018}, the authors studied transmit beamforming in DFRC systems with multiple receivers, i.e., multiuser setup, in which the communication waveform is utilized as a radar transmit waveform. 
	In such a dual-function system, the available degrees of freedom (DoF) for the MIMO radar waveform, which affects the resulting radar beam pattern, is equal to the number of communication users. Since the DoF of conventional MIMO radar, i.e., without communication functionality, is at most the number of transmit antennas, the resulting MIMO radar cannot utilize its full DoF when the number of users is smaller than the number of antennas, potentially   leading to significant distortion of  radar beam pattern. 
	Furthermore, the resulting problem is a non-convex optimization problem, which is solved by sub-optimal methods.
	
	In this paper, we design a transmit beamforming based MIMO DFRC system. 
	As was done in \cite{liu_mu-mimo_2018}, we design our transmit beamforming to optimize both the radar transmit beam pattern and the SINR at the communication users. 
	Unlike \cite{liu_mu-mimo_2018}, our proposed joint transmitter utilizes jointly precoded {\em individual communication and radar waveforms}, allowing to extend the MIMO radar waveform DoF to its maximal value, i.e., the number of antennas.  
	In fact, the previously proposed formulation of \cite{liu_mu-mimo_2018} can be  regarded as a special case of the proposed one by nullifying the dedicated radar waveform. 
	Furthermore, while we utilize individual signals for radar and communications, we exploit the fact that both signals are transmitted from the same device, which also accommodates the radar receiver. Consequently, the radar receiver has complete knowledge of the transmitted communication waveform, which is  utilized for target detection in addition to the dedicated radar signal. This approach contributes to the power-efficiency of the DFRC system, further exploiting the inherent advantages of joint design over co-existing separate radar and communication systems.

	We formulate the design of the resulting precoding method as a non-convex optimization problem, which we tackle using two different methods: First, we show that it can be relaxed into  an equivalent semidefinite problem, where the latter can be solved using conventional optimization tools, and prove that the relaxation is tight. To circumvent the computational burden of recovering the optimal precoders from the relaxed formulation, we propose an additional design approach based on zero-forcing the interference. 
	Our numerical results demonstrate that, due to the increased DoF of the MIMO radar waveform, our approach  obtains improved radar transmit beam pattern compared to \cite{liu_mu-mimo_2018}, under the same SINR constraints at the communication users. Furthermore, we demonstrate that under high SINR constraints our reduced complexity zero-forcing technique is capable of achieving comparable performance to that of the optimal beamforming scheme, whose compuataion is substantially more complex, indicating the potential of our approach in designing reliable DFRC beamforming at controllable complexity.

	The rest of this paper is organized as follows.
	Section~\ref{section2} proposes the system  model.
	Section~\ref{section3} introduces the performance metrics of MIMO radar and multiuser MIMO communication, respectively. 
	Section~\ref{section4} establishes the optimization model for joint beamforming, and proposes two methods for designing precoders based on that formulation. 
	Simulation results are presented in Section~\ref{section5}. Finally, Section~\ref{sec:conclusion} provides concluding remarks.

	\emph{Notations:} 
	In this paper,  $(\cdot) ^ {H}$, $(\cdot) ^ c$ and $(\cdot) ^ {T}$ denote  Hermitian transpose, conjugate and transpose, respectively. 
	Vectors are denoted by bold lower class letters and matrices are denoted by bold upper class letters.
	For a matrix $\bm{A}$, the $(i,j)$-th elements of $\bm{A}$ is denoted by $\left[\bm{A}\right]_{i,j}$, and $\left[\bm{A}\right]_{1:j}$ denotes the sub-matrix containing the first $j$ columns of $\bm{A}$. 
	We let $\bm I_n$ and $\bm 0_{m \times n}$ denote $n$-dimensional identity matrix and $m \times n$ zero matrix, respectively.
	We use $\mathbf{E}(\cdot)$ for the stochastic expectation.
	For an integer $n > 0$, the set consisting of all $n$-dimensional complex positive semidefinite matrices is denoted by $\mathcal{S}_n ^ +$.

	\section{Signal  Model} \label{section2}

	Consider an antenna array shared by a colocated monostatic MIMO radar system and a multiuser MIMO communication transmitter as depicted in Fig.~\ref{fig:function}. 
	In our work, both functionalities operate simultaneously by joint beamforming.
	The system diagram of our joint beamforming transmitter is shown in Fig.~\ref{fig:system}, demonstrating that the  transmited signal is a weighted sum of communication symbols and radar waveform.
	We consider an antenna array of $M$ elements, and let the discrete-time transmit signal of this array at time index $n$ be given by 
	\begin{equation} \label{eq-1-1}
	\bm{x}[n] = \bm{W}_r \bm{s} [n] +  \bm{W}_c \bm{c} [n],\ n = 0,\ldots,N-1.
	\end{equation}
	Here, the $M\times 1$ vector $\bm{s} [n] = [s_1[n], \ldots, s_{M}(n)]^T$ includes $M$ individual radar waveforms, 
	and the $M \times M$ matrix $ \bm{W}_r $ is the beamforming matrix (or precoder) for radar waveform.
	Similarly, $\bm{c}[n] =  [c_{1}[n], \ldots, c_{K}[n]]^T$ is a $K\times 1$  vector including $ K $  parallel communication symbol streams to be communicated to $K$ users, respectively, while the $M \times K$ matrix $ \bm{W}_c $ is the communication precoder.
	To achieve  alias-free signal sampling and symbol transmission, the communication symbol duration or the radar code duration, denoted by $t_s$, should satisfy $t_s \geq  1/(2B)$  \cite{989875}, where $B$ is the baseband bandwidth of the transmit platform. 
	The maximal available radar delay resolution is  bounded by the  symbol duration, and is given by $1/(2B)$. The maximal available symbol rate is $2B$.
	We note that the scheme proposed in \cite{liu_mu-mimo_2018}, which beamformed the communications symbols to be utilized for probing, can be regarded as a special case of our system by letting the radar waveform be zero, namely transmitting only communication symbols, as depicted in Fig.~\ref{fig:system3}.

	\begin{figure} 
		\centering
		\subfigure[]{
			\includegraphics[width=\linewidth]{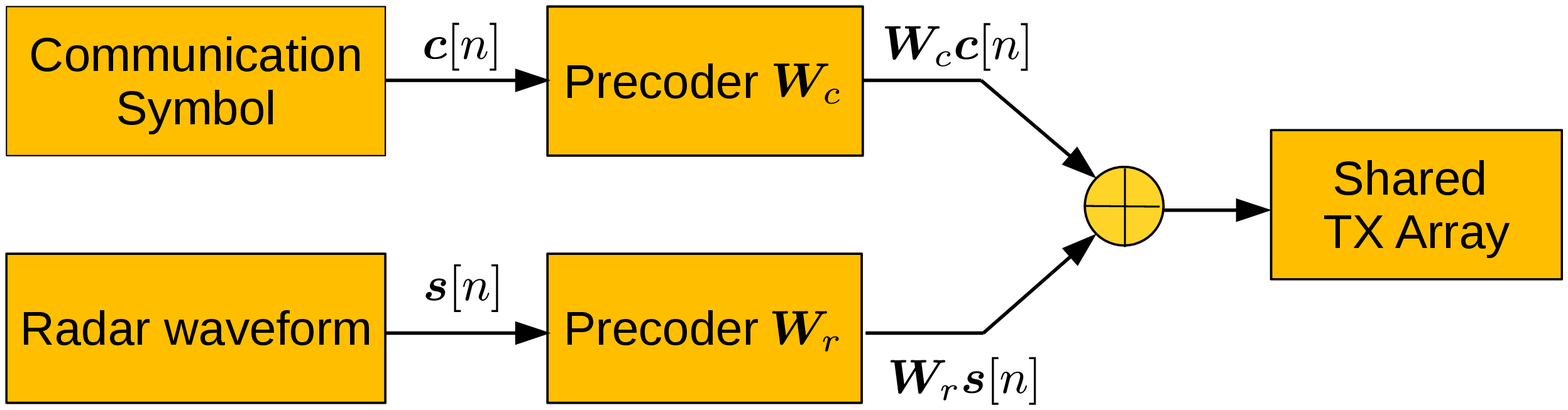}
			\label{fig:system}
		}
		\subfigure[]{
			\includegraphics[width=\linewidth]{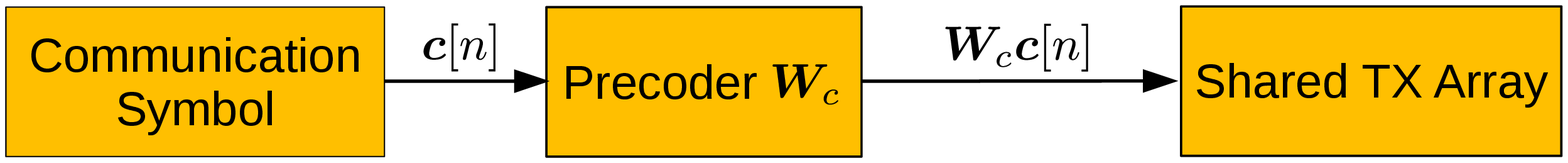}
			\label{fig:system3}
		}
		\caption{(a) The joint transmitter jointly precodes  both communication symbols and radar waveform; (b) The joint transmitter only utilizes precoded  communication symbols.}
	\end{figure}
	
	Our goal is to design the matrices $W_c$ and 
	$W_r$ in \eqref{eq-1-1}. Without loss of generality, we make the following assumptions: 1) Both radar and communication signals are zero-mean, temporally-white and wide-sense stationary stochastic process; 2) The communication symbols are uncorrelated with radar waveform, i.e.,
	\begin{equation}  \label{eq-1-3}
	\mathbf{E} \left( \bm{s}[n] \bm{c} ^ H [n] \right) = \mathbf{E} \left( \bm{s}[n] \right) \mathbf{E} ( \bm{c} ^ H [n] ) = \bm{0}_{M \times K};
	\end{equation}
	3) Communication symbols intended to different users are   uncorrelated, namely,
	\begin{equation} \label{eq-1-4}
	\mathbf{E} \left( \bm{c}[n] \bm{c} ^ H [n]  \right) = \bm{I}_K;
	\end{equation}
	4) The individual radar waveform are generated by pseudo random coding \cite{6071523, 8468331, 6176764, 1455966, golomb2005signal}, and thus are  uncorrelated with each other, resulting in 
	\begin{equation} \label{eq-1-2}
	\mathbf{E} \left( \bm{s}[n] \bm{s} ^ H [n]  \right) = \bm{I}_M.
	\end{equation}
	Here, both signals are normalized to have unit power, and their real power is encapsulated in their corresponding precoders $\bm{W}_r$ and $\bm{W}_c$.
	
	Since we focus on DFRC schemes in which individual uncorrelated waveforms are used for radar and communications, radar interference can induce a notable degradation in the ability of the communication receivers to recover the transmitted symbols. In particular, to achieve radar detection, the transmit power of the dual function platform is usually much higher than that of a typical communication transmitter, because the echos reflected from the targets are attenuated with a two-way propagation loss. However, radar transmit beams are designed to be highly directional, and thus radar interference is dominant only when the radar beam is steered towards the communication receivers.  
	In particular, when the communication receivers lie in the same direction as that of the radar targets, they are expected to observe high radar interference.
	Nevertheless, the transmitted communication waveform is completely known at the  radar receiver  and thus its reflected signal  can also be utilized for target detection, indicating that this challenge can be overcome by forming a high-gain communication beam to  simultaneously cover the targets and communication receiver for jointly radar sensing and data transmission, as in the shared integrated waveform DFRC  design \cite{ sturm_novel_2009, sturm_ofdm_2009, sturm_waveform_2011, zhang_ofdm_2015, liu_application_2017 ,sahin_novel_2017, li_integrated_2019, sahin_filter_2017, zhang_modified_2017}.
	Under such a waveform reuse design, the ``radar signal'' at  communication received is not interference but the expected communication signal.
	
	In order to implement joint transmit beamforming, the precoders $\bm{W}_c$ and $\bm{W}_r$ are to be jointly designed in consideration of the system performance.
	The performance metrics of MIMO radar and multiuser MIMO communication are detailed in  Section~\ref{section3}.
	In practice, the precoders should satisfy some constraints representing the transmit hardware.
	Here, we require that the transmit waveform satisfies a per-antenna power constraint, namely, that the transmit power of each antenna is identical.
	The per-antenna power constraint settles with the common practice that radar waveforms should be transmitted with their maximal available power \cite{stoica_probing_2007}, and has also been applied in multi-antenna communication systems \cite{4203115, 5733456, loyka2017capacity}.
	We note that the per-antenna power constraints can be extended to represent other power-related limitations, such as total power constraints, according to the hardware requirements. 
	
	To formulate the power constraint, define the covariance of transmit waveform as
	\begin{equation} \label{eq:R}
	\bm{R} = \mathbf{E} \left(  \bm{x} [n] \bm{x} ^ H [n] \right)  .
	\end{equation}
	Substituting  \eqref{eq-1-1}-\eqref{eq-1-2} into \eqref{eq:R} yields the covariance $\bm{R}$ as
	\begin{equation}
	\bm{R} = \bm{W}_r \bm{W}_r ^ H + \bm{W}_c \bm{W}_c ^ H.
	\end{equation}
	The per-antenna power constraint implies that for each $\ m = 1,\ldots,M$ it holds that
	\begin{equation} \label{ch2-eq-power}
	\left[\bm{R}\right]_{m,m}  = \left[\bm{W}_r \bm{W}_r ^ H + \bm{W}_c \bm{W}_c ^ H\right]_{m,m}  =   P_t / M , 
	\end{equation}
	where $P_t$ is the total transmit power.    
	Under this constraint, we discuss the radar and communication metrics for precoder design in the following section.

	\section{Performance metrics of radar and communication} \label{section3}
	Based on the signal model of joint transmit beamforming, we aim to design the precoders in light of the following guidelines:
	For MIMO radar, the precoder is designed to synthesize transmit beams towards radar targets of interests; 
	For multiuser MIMO communication, the precoder is designed to guarantee the receiving SINR at communication users.
	These performance metrics of MIMO radar  and multiuser MIMO communication are properly formulated in Subsections~\ref{section3-1} and \ref{section3-2}, respectively.
	
	\subsection{MIMO Radar Performance} \label{section3-1}
	The main purpose of MIMO radar beamforming is to direct the transmit beam towards several given directions, so that one can obtain more information of the targets illuminated by these beams. These directions are typically known to the transmitter: 
	When radar works in tracking mode, the beam direction is inferred from the direction of the targets acquired at previous observations;
	When radar works in searching mode, the beam direction is given by the center of angular sector-of-interest. 
	Consequently, to formulate the performance metric associated with MIMO radar beamforming, we first express the transmitted signal at each direction, and then develop a loss function evaluating the transmit beam pattern. Combining the loss function and the per-antenna power constraint, we achieve an optimization problem which accounts for the radar performance.
	
	In DFRC systems, the communication signals can also be used for sensing, since the radar receiver has complete knowledge of the transmitted communication waveform. In this way, the communication signal is not regarded as interference at the radar receiver.
	Under the assumption that the transmit waveform is narrow-band and the propagation path is line of sight (LoS), the baseband signal at direction $\theta$ can be expressed as
	\begin{equation}
	y[n; \theta] = \bm{a} ^ H (\theta)  \bm{x} [n],
	\end{equation}
	where $\bm{a}(\theta)$ is the array steering vector of direction $\theta$.
	When the waveform is reflected from a point target located at angular direction $ \theta $, the received signal can be written as
	\begin{equation} \label{eq-radar-receive}
	\bm r [n] = \beta \bm a ^ c(\theta) \bm a ^ H (\theta) \bm x [n - n'] + \bm v [n],
	\end{equation} 
	where $ \beta $ is the complex amplitude proportional to the radar-cross sections (RCS) of the target, $n'$ represents the discrete time delay, and $ \bm v[n]$ is additive zero-mean temporally-white noise with covariance $\bm R _ v$. 
	
	Following the guidelines for MIMO radar probing signal design stated in \cite{stoica_probing_2007}, the desired goals of MIMO radar transmit beamforming include:
	\begin{enumerate}[1)]
		\item Optimize the transmit power at given directions, or generally match a desired beam pattern;
		\item Decrease the cross correlation pattern among signals at several given directions, which is essential for the performance of adaptive MIMO radar techniques.
	\end{enumerate}
	Here, the transmit power (beam pattern) at angular direction $\theta$ is
	\begin{align}  
	& P(\theta; \bm{R}) = \mathbf{E} \left( \left| y[n;\theta] \right|^2  \right) \notag \\
	&= \mathbf{E} \left( \bm{a} ^ H (\theta)  \bm{x} [n] \bm{x} ^ H [n] \bm{a} (\theta) \right)   = \bm{a} ^H (\theta)  \bm{R} \bm{a}(\theta),
	\label{eq-2-3}
	\end{align}  
	and the cross correlation pattern between direction $\theta_1$ and $\theta_2$ is defined as
	\begin{align}   
	& P_c (\theta_1, \theta_2; \bm{R}) = \mathbf{E} \left( y ^ *[ n; \theta_1 ]  y[n; \theta_2]  \right) \notag \\
	&= \mathbf{E} \left( \bm{a} ^ H (\theta_2)  \bm{x} [n] \bm{x} ^ H [n] \bm{a} (\theta_1) \right)   = \bm{a} ^H (\theta_2)  \bm{R} \bm{a}(\theta_1).
	\label{eq-2-3-1}
	\end{align}    
	From \eqref{eq-2-3} and \eqref{eq-2-3-1},  both the transmit beam pattern and cross correlation pattern are determined by the covariance $\bm{R}$.
	Then, properly beamforming of MIMO radar waveforms is achieved by  designing the covariance matrix $\bm R$ \cite{stoica_probing_2007, fuhrmann_transmit_2008}.
	
	To this aim, we use the loss function proposed in \cite{stoica_probing_2007,fuhrmann_transmit_2008}  to evaluate the radar performance, which is the weighted sum of two parts: beam pattern error and cross correlation. In particular, the first part is the mean square error (MSE) between the obtained beam pattern and some desired beam pattern,  given by
	\begin{equation}  \label{eq-2-7}
	L_{r,1}(\bm{R}, \alpha) =  \frac{1}{L} \sum _ {l =1} ^ L  \left| \alpha d(\theta_l) - P(\theta_l; \bm{R}) \right| ^ 2,
	\end{equation}
	where $\alpha$ is a scaling factor, $d(\theta)$ is the given desired beam pattern, and $\left\{\theta_l \right \} _{l=1} ^ {L}$ are sampled angle grids.
	The second part is the mean-squared cross correlation pattern, expressed as
	\begin{equation}  \label{eq-2-8}
	L_{r,2}(\bm{R}) = \frac{2 }{P^2 - P} \sum_{p = 1}^{P-1} \sum_{q = p+1}^{P} \left| P_c( \overline{ \theta} _p,  \overline{ \theta}  _q ; \bm{R})  \right| ^ 2,
	\end{equation}   
	where $\{ \overline{ \theta} _p   \} _{p=1} ^ {P}$ are the given  directions of the targets. The summation in \eqref{eq-2-8} is normalized by $ \frac{2 }{P^2 - P} $, as there exists $\frac{P^2 - P}{2}$ pairs of distinct directions in the set $\{ \overline{ \theta} _p   \}$. 
	The loss function of radar is  then
	\begin{equation} \label{eq-2-9}
	L_{r}(\bm{R}, \alpha) = L_{r,1}(\bm{R}, \alpha) + w_c L_{r,2}(\bm{R} ), 
	\end{equation}
	where $w_c$ is a weighting factor.
	As discussed in \cite{stoica_probing_2007,fuhrmann_transmit_2008}, the loss function $L_{r}(\bm{R}, \alpha)$ can be written as a positive-semidefinite quadratic function of $\bm{R}$ and $\alpha$. 
	
	Combining the loss functions in \eqref{eq-2-7} and \eqref{eq-2-8}, the  covariance of the transmitted signal in the absence of communication constrains, i.e., in a radar-only setup, can be designed in light of the overall radar objective under per-antenna power constraints  \cite{stoica_probing_2007}, i.e. 
	\begin{subequations} \label{eq:radaropt}
		\begin{align}
		\min_{\bm{R} , \alpha }  \  \ \ \ & \  L_r(\bm{R}, \alpha )  \\
		\mathrm{subject \ to} \  & \ \bm{R}   \in \mathcal{S}_{M}^+, \\ 
		& \ \left[\bm{R}\right]_{m,m} = P_t / M , \ m = 1,\ldots,M.
		\end{align}
	\end{subequations} 
	We denote the optimal covariance of this problem by $\bm{R}_0$.
	Generally, the  performance requirement of multiuser MIMO communication, detailed in Subsection~\ref{section3-2}, cannot be satisfied if the covariance of transmit waveform is $\bm{R}_0$.  
	In other words, there is an inherent radar performance loss due to spectrum sharing with communications compared to the radar-only case.
	To address the communication performance of our DFRC system, we discuss the communication metric in the next subsection.

	\subsection{Multiuser MIMO Communication Performance} \label{section3-2}
	A common performance measure for multiuser broadcast communications is the SINR, which is directly related to the achievable rate under reduced complexity decoding \cite[Ch. 8]{el2011network}. Consequently, we design the precoders of  MIMO transmission to optimize the  users' SINR. To this aim, we first present the communication signal model, and derive the expression of SINR with respect to the precoders (for radar and communication signals) and the channel matrix, followed by a formulation of the combined objective which accounts for communication performance. 
	
	Consider a down-link multiuser MIMO transmission scenario with $K<M$ single antenna users observing the output of a frequency flat Gaussian noise channel.    The channel output at the $K$ users at time instance $n$, represented via the $K \times 1$ vector ${\bm r}[n]$, is given by
	\begin{equation} \label{eq-comm-receiver}
	\bm{r} [n] = \bm{H} \bm{W}_c \bm{c} [n] + \bm{H} \bm{W}_r \bm{s} [n] + \bm{v} [n], 
	\end{equation} 
	where $\bm{H}$ is the $K\times M$ narrow-band channel matrix and $\bm{v}[n]$ is additive white Gaussian noise (AWGN) with covariance $\sigma^2 \bm{I}_K$.

	In multiuser transmit beamforming, the precoder should be designed to guarantee a certain level of SINR at the users.
	Here, it is assumed that the transmit array knows the instantaneous downlink channel $\bm H$. This knowledge can be obtained for example, by exploiting wireless channel reciprocity when operating in  time-division duplex mode, i.e., the downlink channel is obtained via uplink channel estimation. Alternatively, in frequency-division duplex mode, downlink channel can be obtained via channel feedback from the users, see, e.g., \cite{4299617}.
	Define the equivalent radar-to-user channel and equivalent inter-user channel  matrices as 
	\begin{subequations}\label{eq:FrFc}
		\begin{align}
		\bm{F}_r = \bm{H} \bm{W}_r,\\
		\bm{F}_c = \bm{H} \bm{W}_c,
		\end{align}
	\end{subequations}
	respectively.
	Since the users are generally not able to cooperate with each other, the off-diagonal elements of $\bm{F}_c $ lead to inter-user interference, which should be mitigated by precoding.
	At the same time, since the users generally do not have any prior information on radar waveform, $\bm{F}_r $ leads to interference from radar.  
	At the $k$-th user, the signal power is 
	\begin{equation}
	\mathbf{E} \left(  | [ \bm{F}_c ]_{k,k} c_k(t) | ^ 2 \right)  = \left|  [ \bm{F}_c ]_{k,k} \right | ^ 2,
	\end{equation}
	the power of inter-user interference is 
	\begin{equation}
	\mathbf{E} \Big( \sum_{i \neq k}  | [ \bm{F}_c ]_{k,i} c_i(t)  ^ 2 \Big)  =   \sum_{i \neq k} \left |   [ \bm{F}_c ]_{k,i} \right|^2,
	\end{equation}
	and the power of interference from radar is
	\begin{equation}
	\mathbf{E} \Big( \sum_{i = 1}^{M}  | [ \bm{F}_r ]_{k,i} s_i(t) | ^ 2 \Big)  =   \sum_{i = 1}^{M}   \left| [ \bm{F}_r ]_{k,i}  \right|^2.
	\end{equation}
	Therefore, the SINR at the $k$-th user is expressed as
	\begin{equation} \label{eq-ch3-sinr}
	\gamma_k = \frac{| [ \bm{F}_c ]_{k,k} | ^ 2 }{\sum_{i \neq k}  | [ \bm{F}_c ]_{k,i} |^2 + \sum_{i = 1}^{M}  | [ \bm{F}_r ]_{k,i} |^2 + \sigma ^ 2 }.
	\end{equation}

	Two typical design criteria for multiuser beamforming are \cite{liu_coordinated_2011, wiesel_zero-forcing_2008}:
	\begin{itemize}
		\item \emph{Throughput}: maximizing the sum rate
		\begin{equation}
		\label{eq-sum-rate}
		C(\bm{\gamma}) = \sum_{k=1}^{K} \log_2 \left(1 + \gamma_k \right),
		\end{equation} 
		\item \emph{Fairness}: maximizing the minimal SINR, referred to henceforth as the fairness SINR: 
		\begin{equation}
		\label{eq-fainess-sinr}
		F(\bm{\gamma}) = \min \{\gamma_1, \ldots, \gamma_K \},
		\end{equation}
	\end{itemize} 
	where $\bm{\gamma} = [\gamma_1, \ldots, \gamma_K] ^ T$.
	In this work, we use fairness SINR $ F(\bm{\gamma}) $ as the performance metric for multiuser communication, and require it to be higher than a given threshold $\Gamma$, guaranteeing a minimal level of communication quality of  service at each user, i.e.
	\begin{equation}\label{eq:sinrthreshold}
	\gamma_ k \geq \Gamma, \ k = 1,\ldots,K.
	\end{equation}
	Moreover, the fairness beamforming is  simpler in terms of computation complexity and  can be solved in polynomial time, while the optimal throughput beamforming is NP hard \cite{liu_coordinated_2011}.
	We note that in the formulated joint beamforming problem in Section~\ref{section4}, the fairness SINR requirement can be extended to having $K$  individual SINR constraints \cite{schubert_solution_2004},  namely,
	\begin{equation}\label{eq:sinrthreshold2}
	\gamma_ k \geq \Gamma_k, \ k = 1,\ldots,K,
	\end{equation}
	where $\Gamma_k$ is the SINR threshold at the $k$-th user.

	
	\section{Joint Transmit Beamforming} \label{section4}
	With the proposed MIMO radar and communication performance metrics, we now turn to design a DFRC joint beamforming scheme. We begin by formulating the joint transmit beamforming as an optimization problem with respect to the precoding matrices in Subsection~\ref{subsec:optformulation}. 
	To solve this problem, we  propose a semidefinite relaxation (SDR) based optimization scheme in \ref{subsec:SDR}, and a zero-forcing (ZF) methods which cancels the inter-user interference and the radar interference in Subsections~\ref{sub:zf:FcFr}, respectively.
	
	\subsection{Problem formulation}\label{subsec:optformulation}
	The goal of our joint DFRC beamforming  is to optimize the radar beam pattern under the transmit power and communication quality of service constraints. In particular, we minimize the loss function on radar beam pattern defined in \eqref{eq-2-9}, under the per-antenna power constraint \eqref{ch2-eq-power}  and the fairness SINR constraint \eqref{eq:sinrthreshold} for each downlink user.
	
	Let $\bm{W} = [\bm{W}_c, \bm{W}_r] $ be the overall precoding matrix.
	The precoding matrix can be obtained by solving the following optimization problem
	\begin{subequations} \label{eq-3-22}
		\begin{align}
		\min_{\bm{W}, \alpha}  \ & \  L_r(\bm{R}, \alpha )  \label{eq:costfunction}\\ 
		\mathrm{subject \ to} \  & \ \bm{R} = \bm{W}\bm{W}^ H   \in \mathcal{S}_{M}^+, \label{eq-3-22b} \\
		& \ \left[\bm{R}\right]_{m,m} = P_t / M , \ m = 1,\ldots,M , \label{eq:powerconstraint} \\ 
		& \ \gamma_ k \geq \Gamma, \  k = 1,\ldots, K, \label{eq-3-22e}
		\end{align}
	\end{subequations}  
	where \eqref{eq:costfunction}-\eqref{eq:powerconstraint} come from \eqref{eq:radaropt} addressing the radar performance, and \eqref{eq-3-22e} follows from considering the fairness SINR requirement  \eqref{eq:sinrthreshold}.
	
	The selection of the threshold $\Gamma$ affects the trade-off between the communication quality and radar performance. 
	When $\Gamma = 0$,  \eqref{eq-3-22e} always holds, and the joint radar-communication beamforming problem \eqref{eq-3-22} reduces to the radar-only optimization \eqref{eq:radaropt}.
	When $\Gamma > 0$, compared with the radar-only transmit beamforming problem in \eqref{eq:radaropt}, the precoder $\bm W$, which dictates the equivalent channels via \eqref{eq:FrFc}, is restricted by the SINR constraints in \eqref{eq-3-22e}.
	Therefore, there can be an inherent radar performance loss  induced by the need to meet the communication performance guarantees, as compared to the radar-only case.
	If higher $\Gamma$ is set, higher signal power and less interference is expected  to be observed at the user side, further restricting the precoding matrices.
	As a result, the performance loss of MIMO radar becomes more significant if  higher $\Gamma$ is set.
	
	The optimization problem \eqref{eq-3-22} is not convex because of the quadratic equality constraint in \eqref{eq-3-22b} and is thus difficult to solve. 
	Nonetheless, we show in Subsection~\ref{subsec:SDR} that it can be recast using semidefinite relaxation (SDR) such that the solution to the solvable relaxed problem is also the global optimizer of the original non-convex \eqref{eq-3-22}, i.e., the relaxation is tight. 
	To further reduce the computation complexity, we propose a sub-optimal zero-forcing beamforming strategy in Subsection~\ref{sub:zf:FcFr}, which is shown to be able to approach the performance of the global solution to \eqref{eq-3-22}   in our numerical study presented in Section~\ref{section5}.

	\subsection{Joint Transmit Beamforming via SDR}
	\label{subsec:SDR}
	In this subsection, we tackle the non-convex problem \eqref{eq-3-22} using an SDR strategy \cite{luo_semidefinite_2010,palomar2009convex}. 
	To this aim, we first explicitly write the  relationship \eqref{eq-3-22b} as a  quadratic constraint with respect to each column of $\bm W$. 
	Let $\bm w _ i$ denote the $i$-th column of $\bm W$, for $i = 1, \ldots, M + K$. 
	Then  \eqref{eq-3-22b}  becomes
	\begin{equation}
	\label{eqn:RasSum}
	\bm{R} = \sum_{i=1}^{M+K} \bm w _ i \bm w _ i ^ H.
	\end{equation}    
	Defining $\bm{R}_i = \bm{w}_i \bm {w}_i ^ H$, we have
	\begin{equation}
	\bm{R} = \sum_{i=1}^{M+K} \bm R _ i,
	\end{equation}
	where we omit the rank-one constraints.
	The SINR constraints in \eqref{eq-3-22e} can  be converted to linear constraints in the rank-one matrices $\{ \bm{R}_{i} \}$.
	Letting $\bm h _ {k} ^ H$ denote the $k$-th row of $\bm H$,  $k = 1,\ldots,K$, the entires of the equivalent channel matrices can be written as $ [ \bm F _ c]_{k,i} = \bm h _ k ^ H \bm w _ i$, and $ [ \bm F _ r]_{k,i} = \bm h _ k ^ H \bm w _ {i+K}$. 
	Consequently, the SINR constriant becomes
	\begin{align} 
	\gamma_k &= \frac{  \bm h _ k ^ H \bm w _ k \bm w _ k ^ H \bm h _ k }{ \sum_{1\leq i \leq M+K, i \neq k} \bm h _ k ^ H \bm w _ i \bm w _ i ^ H \bm h _ k + \sigma^2} \notag \\
	&= \frac{  \bm h _ k ^ H \bm R _ k  \bm h _ k }{ \sum_{1\leq i \leq M+K, i \neq k} \bm h _ k ^ H \bm R _ i \bm h _ k + \sigma^2} \notag \\
	&= \frac{  \bm h _ k ^ H \bm R _ k  \bm h _ k }{   \bm h _ k ^ H \bm R   \bm h _ k - \bm h _ k ^ H \bm R _ k \bm h _ k + \sigma^2} \geq \Gamma.
	\label{eq-sinr-sdr}
	\end{align}
	
	We now cast \eqref{eq-3-22} as an equivalent  quadratic semidefinite programming (QSDP) with rank-one constraints
	\begin{subequations} \label{eq-3-rank}
		\begin{align}
		\min_{ \bm{R}. \{ \bm{R} _i \},  \alpha}  \ & \  L_r(\bm{R}, \alpha)  \\
		\mathrm{subject \ to} \  & \ \bm{R}  = \sum_{i=1}^{M+K} \bm R _ i   \in \mathcal{S}_{M}^+ , \label{eq-3-rankb} \\
		& \ \left[\bm{R}\right]_{m,m} = P_t / M , \ m = 1,\ldots,M ,  \\
		& \ \bm{R}_i \in \mathcal{S}_{M}^+, \ \mathrm{rank} (\bm R _ i) = 1 , \ i = 1,\ldots,K+M,
		\label{eq-3-rankd}  \\
		\ \left(1 + \Gamma^{-1} \right) &\bm h _ k ^ H \bm R _ k  \bm h _ k  \geq \bm h _ k ^ H \bm R   \bm h _ k + \sigma^2 , \ k = 1,\ldots,K, \label{eq-3-ranke}
		\end{align}
	\end{subequations}
	where $\eqref{eq-3-ranke}$ is derived from \eqref{eq-sinr-sdr}.
	We observe that in problem \eqref{eq-3-rank}, the individual matrices $ \{ \bm R _ i \}_{i \geq K+1}$ have no effect on the SINR constraints and are only encapsulated in the overall covariance matrix $\bm R$.
	Therefore, we can remove the variables $ \{ \bm R _ i \}_{i \geq K+1}$, and \eqref{eq-3-rank} is relaxed to 
	\begin{subequations} \label{eq-4-rank}
		\begin{align}
		\min_{ \bm{R}, \bm{R}_1, \ldots, \bm{R}_K,  \alpha}  \ & \  L_r(\bm{R}, \alpha)  \\
		\mathrm{subject \ to} \  & \ \bm{R} \in \mathcal{S}_{M}^+, \ \bm{R} - \sum_{k=1}^{K} \bm R _ k   \in \mathcal{S}_{M}^+ , \label{eq-4-rankb} \\
		& \ \left[\bm{R}\right]_{m,m} = P_t / M , \ m = 1,\ldots,M ,  \\
		& \ \bm{R}_k \in \mathcal{S}_{M}^+, \ \mathrm{rank} (\bm R _ k) = 1 , \ k = 1,\ldots,K,
		\label{eq-4-rankd}  \\
		\ \left(1 + \Gamma^{-1} \right) &\bm h _ k ^ H \bm R _ k  \bm h _ k  \geq \bm h _ k ^ H \bm R   \bm h _ k + \sigma^2 , \ k = 1,\ldots,K. \label{eq-4-ranke}
		\end{align}
	\end{subequations}
	
	The optimization problem \eqref{eq-4-rank} is still non-convex because of the rank-one constraints.
	Omitting these constraints leads to the following relaxation:
	\begin{subequations} \label{eq-5-rank}
		\begin{align}
		\min_{ \bm{R}, \bm{R}_1, \ldots, \bm{R}_K,  \alpha}  \ & \  L_r(\bm{R}, \alpha) \label{eq:sdr:objectives}  \\
		\mathrm{subject \ to} \  & \ \bm{R} \in \mathcal{S}_{M}^+, \ \bm{R} - \sum_{k=1}^{K} \bm R _ k   \in \mathcal{S}_{M}^+ , \label{eq-5-rankb} \\
		& \ \left[\bm{R}\right]_{m,m} = P_t / M , \ m = 1,\ldots,M ,  \label{eq-5-rankc}  \\
		& \ \bm{R}_k \in \mathcal{S}_{M}^+, \ k = 1,\ldots,K,
		\label{eq-5-rankd}  \\
		\ \left(1 + \Gamma^{-1} \right) & \bm h _ k ^ H \bm R _ k  \bm h _ k  \geq \bm h _ k ^ H \bm R   \bm h _ k + \sigma^2 , \ k = 1,\ldots,K. \label{eq-5-ranke}
		\end{align}
	\end{subequations}    
	This relaxed optimization model  \eqref{eq-5-rank} is a convex  QSQP,  because the target function is a positive-semidefinite quadratic form and all the
	constraints are either linear or semidefinite. 
	The global optimum of \eqref{eq-5-rank} can be obtained in polynomial time  with convex optimization toolboxes \cite{ vandenberghe1996semidefinite,toh1999sdpt3, tutuncu2003solving , Toh2012}.

	The relaxation used in SDR is tight if the optimal  $\bm R_1, \ldots,\bm R _ K $ for \eqref{eq-5-rank} are exactly rank-one, i.e., the solution to the relaxed problem is also a solution to the original non-convex problem. While such relaxations are not necessarily tight, the SDR used in obtaining \eqref{eq-5-rank} from \eqref{eq-4-rank} is tight, as stated in the following theorem:
	\begin{mythm} \label{thm1}
		There exists a global optimum for \eqref{eq-5-rank}, denoted by  $ \tilde{ \bm R },\tilde{ \bm{R} }_{1}, \ldots, \tilde{ \bm{R}}_{K}$,   satisfying
		\begin{displaymath}
		\mathrm{rank} (\tilde{ \bm{R} }_k) = 1, \ k = 1,\ldots,K. 
		\end{displaymath}
	\end{mythm}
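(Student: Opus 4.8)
The plan is to exploit the fact that the objective $L_r(\bm{R},\alpha)$ in \eqref{eq:sdr:objectives} depends only on the aggregate covariance $\bm{R}$ and on $\alpha$, and not on the individual matrices $\{\bm{R}_k\}$. A global optimum of \eqref{eq-5-rank} exists because the feasible set is compact: the diagonal constraints \eqref{eq-5-rankc} fix $\mathrm{tr}(\bm{R}) = P_t$, and \eqref{eq-5-rankb} together with \eqref{eq-5-rankd} force $\bm{0}_{M\times M} \preceq \bm{R}_k \preceq \bm{R}$, so all variables lie in a bounded closed set on which the continuous $L_r$ attains its minimum. Denote such a minimizer by $(\bm{R}^\star, \bm{R}_1^\star, \ldots, \bm{R}_K^\star, \alpha^\star)$. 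My strategy is to keep $\bm{R}^\star$ and $\alpha^\star$ fixed and replace each $\bm{R}_k^\star$ by a rank-one matrix preserving feasibility; since the objective only sees $\bm{R}^\star$ and $\alpha^\star$, the modified point stays globally optimal while meeting the rank requirement.

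For the rank-reduction step I would set, for each $k = 1,\ldots,K$,
\[
\tilde{\bm{R}}_k = \frac{\bm{R}_k^\star \bm{h}_k \bm{h}_k^H \bm{R}_k^\star}{\bm{h}_k^H \bm{R}_k^\star \bm{h}_k},
\]
which is rank-one. It is well-defined because the SINR constraint \eqref{eq-5-ranke} gives $\bm{h}_k^H \bm{R}_k^\star \bm{h}_k \geq \sigma^2/(1+\Gamma^{-1}) > 0$. I would then check three properties. First, $\bm{h}_k^H \tilde{\bm{R}}_k \bm{h}_k = \bm{h}_k^H \bm{R}_k^\star \bm{h}_k$, so the left-hand side of \eqref{eq-5-ranke} is preserved, while its right-hand side depends only on the unchanged $\bm{R}^\star$; hence every SINR constraint still holds. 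Second, $\tilde{\bm{R}}_k \succeq \bm{0}_{M\times M}$, which is immediate. Third, and crucially, $\tilde{\bm{R}}_k \preceq \bm{R}_k^\star$, which yields $\bm{R}^\star - \sum_{k=1}^{K}\tilde{\bm{R}}_k \succeq \bm{R}^\star - \sum_{k=1}^{K}\bm{R}_k^\star \succeq \bm{0}_{M\times M}$, so \eqref{eq-5-rankb} is maintained. The diagonal constraint \eqref{eq-5-rankc} and the objective are untouched since $\bm{R}^\star$ and $\alpha^\star$ are retained.

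The hard part will be the Loewner domination $\tilde{\bm{R}}_k \preceq \bm{R}_k^\star$. I would establish it by a Cauchy--Schwarz argument: writing $\bm{A} = \bm{R}_k^\star$ and its Hermitian square root $\bm{A}^{1/2}$, for every $\bm{x}$ one has
\[
\bm{x}^H \tilde{\bm{R}}_k \bm{x} = \frac{|(\bm{A}^{1/2}\bm{h}_k)^H (\bm{A}^{1/2}\bm{x})|^2}{\|\bm{A}^{1/2}\bm{h}_k\|^2} \leq \|\bm{A}^{1/2}\bm{x}\|^2 = \bm{x}^H \bm{A}\bm{x},
\]
which is exactly $\tilde{\bm{R}}_k \preceq \bm{R}_k^\star$. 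With these three properties verified, $(\bm{R}^\star, \tilde{\bm{R}}_1, \ldots, \tilde{\bm{R}}_K, \alpha^\star)$ is feasible, attains the optimal value, and has $\mathrm{rank}(\tilde{\bm{R}}_k) = 1$ for all $k$, proving the theorem.
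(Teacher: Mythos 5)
Your proposal is correct and follows essentially the same route as the paper's proof: the rank-one replacement $\tilde{\bm{R}}_k = (\bm{h}_k^H \hat{\bm{R}}_k \bm{h}_k)^{-1}\hat{\bm{R}}_k \bm{h}_k \bm{h}_k^H \hat{\bm{R}}_k$ is exactly the paper's construction, and the Cauchy--Schwarz argument for $\tilde{\bm{R}}_k \preceq \hat{\bm{R}}_k$ matches as well. Your additional remarks on existence of a minimizer via compactness and on the strict positivity of $\bm{h}_k^H \hat{\bm{R}}_k \bm{h}_k$ (from the SINR constraint) are correct details the paper leaves implicit.
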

	
	\begin{proof}
		See Appendix \ref{app:Proof1}.
	\end{proof}
	
	We note that Theorem~\ref{thm1} only states that the rank-one global optimum exists. Generally, the global optimum to \eqref{eq-5-rank} may not be unique and convex optimization software may not give a rank-one solution.
	Once the optimal solution $ \hat{\bm{R}}, \hat{\bm{R}}_1,\ldots,\hat{\bm {R}} _ K$  are obtained, we use them to obtain the  rank-one optimal solution 
	$\tilde{\bm{R}}_1,\ldots,\tilde{\bm {R}} _ K$ and the  corresponding optimal precoder $\tilde{\bm{w}}_1,\ldots,\tilde{\bm {w}} _ K$, as presented in Appendix \ref{app:Proof1}.
	First, we compute $\tilde{\bm{R}}_1,\ldots,\tilde{\bm {R}} _ K$ and $\tilde{\bm R}, \tilde{\bm{w}}_1,\ldots,\tilde{\bm {w}} _ K$ via
	\begin{equation} \label{eq-w1-k}
\tilde{\bm R}=  \hat{\bm R}, \	\tilde{ \bm w }_k = \big(\bm h_k ^ H \hat{ \bm R } _ k \bm h _ k \big) ^ {-1/2} \hat { \bm R } _ k \bm h _ k, \ \tilde{ \bm R }_k = \tilde{ \bm w }_k \tilde{ \bm w }_k ^ H,  
	\end{equation}
	for  $k = 1,\ldots,K$.
	According to the proof of Theorem~\ref{thm1}, $\tilde{\bm{R}}, \tilde{\bm{R}}_1,\ldots,\tilde{\bm {R}} _ K$ is optimal to \eqref{eq-4-rank} and hence is also optimal \eqref{eq-5-rank}.
	To show that  $\tilde{\bm{R}}, \tilde{\bm{R}}_1,\ldots,\tilde{\bm {R}} _ K$ is also  optimal to \eqref{eq-3-rank}, we construct rank-one matrices $ \{ \tilde{ \bm R} _ i \}_{i \geq K+1}$   as $\tilde{\bm{R}}_i= \tilde{\bm{w}} _ i\tilde{\bm{w}} _ i ^ H$, where the vectors $\tilde {{\bm w}}_i$ for $i > K$ are calculated by the Cholesky decomposition \cite{zhang2017matrix}
	\begin{equation} \label{eq-wk-m}
	\bm W _ r \bm W _ r ^ H = \tilde{ \bm{R}} - \sum_{k=1}^{K} \tilde{ \bm w} _ k \tilde{\bm {w}} _ k ^ H ,
	\end{equation}
	where $\bm  W _ r = \left[\tilde{ \bm w} _ {K+1}, \ldots, \tilde{\bm w}_{K+M} \right]$ is a lower triangular matrix.
	From \eqref{eq-wk-m}, it can be verified that constraint \eqref{eq-3-rankb} holds for $\tilde{\bm R}, \tilde{\bm R}_1, \ldots, \tilde{\bm R}_{K+M}$.
	Therefore,  $\tilde{\bm R}, \tilde{\bm R}_1, \ldots, \tilde{\bm R}_{K+M}$ is a feasible solution to \eqref{eq-3-rank} and hence is  also an optimal solution to \eqref{eq-3-rank}.
	Furthermore, the precoding matrix $\tilde{\bm W} = \left[ \tilde{ \bm w }_1,\ldots, \tilde{ \bm w }_{K+M} \right]$ is a solution to \eqref{eq-3-22}.
	
	We summarize the procedure to compute the precoding  matrix $\bm{W} $ in Algorithm~\ref{A-SDR}.
	The main computational burden in Algorithm~\ref{A-SDR} stems from solving the QSDP \eqref{eq-5-rank}.
	Specifically, given a solution accuracy $\epsilon$, the worst case complexity to solve the  QSDP  \eqref{eq-5-rank} with the primal-dual interior-point algorithm
	in \cite{nie_predictorcorrector_2001, toh_inexact_2008 } is $ \mathcal{O} (K ^ {6.5} M ^ {6.5} \log (1 / \epsilon) )$.
	%
	
	\begin{algorithm}
		
		\begin{algorithmic}[1]
			\Require  
			\Statex \hspace{-1.5em} Total transmit power $P_t$;
			\Statex \hspace{-1.5em}  Power of AWGN at users $\sigma^2$;
			\Statex \hspace{-1.5em} Expression of the MIMO radar loss function $L_r(\bm{R}, \bm{\alpha})$;
			\Statex \hspace{-1.5em} Instantaneous downlink channel $\bm{H}$;
			\Statex \hspace{-1.5em} SINR threshold $\Gamma$.
			\Ensure 
			\Statex \hspace{-1.5em} The overall precoding matrix $\bm{W}$.	
			\Statex{ \hspace{-2em} \textbf{Steps: } }
			\State   Compute the optimal value of $\hat{\bm{R}},\hat{\bm{R}}_1,\ldots,\hat{\bm{R}}_K$  by solving  \eqref{eq-5-rank} with convex optimization solvers. 
			\State Compute $\tilde{\bm w}_1, \ldots, \tilde{\bm w} _ K$ via \eqref{eq-w1-k}.
			\State Compute $\tilde{\bm w}_{K+1}, \ldots,\tilde{ \bm w }_ {K+M}$ via \eqref{eq-wk-m}.
			\State Set the overall precoding matrix $  \tilde{\bm W} = \left[ \tilde{ \bm w }_1,\ldots, \tilde{ \bm w }_{K+M} \right]$.

			\caption{Joint transmit beamforming via SDR}
			\label{A-SDR}
		\end{algorithmic}
	\end{algorithm}

	\subsection{Joint Transmit Beamforming  via ZF}
	\label{sub:zf:FcFr}
	The computational burden associated with obtaining the precoder via Algorithm~\ref{A-SDR} motivates seeking a reduced complexity sub-optimal beamforming strategy. 
	In this subsection, we focus on ZF beamforming. ZF methods facilitate obtaining closed-form, tractable, and interpretable precoders \cite{dimic_downlink_2005,wiesel_zero-forcing_2008}. In addition to its relative simplicity, from a communications perspective, ZF beamforming is known to asymptotically approach the sum-capacity in  broadcast channels \cite{yoo_optimality_2006}, indicating its potential to approach optimal performance in setups involving multi-user communications.
	
	We design the precoders to  eliminate the inter-user interference and radar interference, obtained by restricting $\bm F_c$ to a diagonal matrix and $\bm F_ r $ to a zero matrix, i.e.
	\begin{equation}\label{eq:Fr=0}
	\bm{F}_c = \mathrm{diag} \left( \sqrt{p_1},\ldots,\sqrt{p_K} \right), \  \bm F_r = \bm 0_{K \times M}.
	\end{equation}
	Here, $p_k$ is the signal power at the $k$-th user, for $1 \leq k \leq K$.
	Enforcing the interference to be canceled facilitates achieving high SINR values at the users. In our numerical study in Section~\ref{section5} we demonstrate that the achievable performance under the additional ZF constraint approaches that of the global solution to \eqref{eq-3-22}, obtained with increased computational burden via Algorithm~\ref{A-SDR},  when the SINR threshold is high.
	
	
	In ZF beamforming, the SINR constraint \eqref{eq-3-22e} is reformulated as
	$\frac{1}{\Gamma} p_k \geq  \sigma ^ 2$, 
	and the corresponding optimization problem \eqref{eq-3-22} becomes
	\begin{subequations} \label{eq-3-23}
		\begin{align}
		\min_{\bm{W},   \alpha}  \ & \  L_r(\bm{R},  \alpha )  \\
		\mathrm{subject \ to} \  & \ \bm{R} = \bm{W}\bm{W}^ H   \in \mathcal{S}_{M}^+,  \label{eq-3-23b}  \\
		& \ \left[\bm{R}\right]_{m,m} = P_t / M , \ m = 1,\ldots,M ,  \\
		& \ \bm{H W} =  \left[ \mathrm{diag}( \sqrt{\bm{p}}),\bm{0}_{K \times M} \right]
		, \label{eq-3-23d} \\
		& \ \frac{1}{\Gamma} {p}_k \geq  \sigma ^ 2, \ k = 1,\ldots,K. \label{eq:optII:nonconvex:SINRconstraint}
		\end{align}
	\end{subequations} 
	
	The ZF beamforming optimization \eqref{eq-3-23} is still non-convex.
	The following theorem shows that it can be converted to convex problem. 
	\begin{mythm}   \label{thm2}    
		Given a covariance matrix $\bm{R} \in \mathcal{S}_n ^ +$ and a full rank $K \times (K+M)$ matrix $\bm F$, there exists a matrix $\bm{W}$ satisfying \eqref{eq-3-23b} and 
		\begin{equation} \label{eq-3-23d1}
		\bm H \bm W = \bm F         
		\end{equation}
		if and only if 
		\begin{equation} \label{eq-3-24}
		\bm{H} \bm{R} \bm{H} ^ H = \bm{F} \bm{F} ^ H .
		\end{equation} 
	\end{mythm}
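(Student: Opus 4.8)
The plan is to handle the two implications separately, with the necessity of \eqref{eq-3-24} being essentially free and all the work going into the converse. For necessity, if some $\bm{W}$ satisfies $\bm{R} = \bm{W}\bm{W}^H$ and $\bm{H}\bm{W} = \bm{F}$, then $\bm{H}\bm{R}\bm{H}^H = \bm{H}\bm{W}\bm{W}^H\bm{H}^H = (\bm{H}\bm{W})(\bm{H}\bm{W})^H = \bm{F}\bm{F}^H$, so \eqref{eq-3-24} holds immediately. The substance is the sufficiency direction, where I must \emph{construct} an $M \times (K+M)$ precoder $\bm{W}$ realizing both the prescribed covariance and the prescribed image under $\bm{H}$, starting only from the identity \eqref{eq-3-24}.

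First I would decouple the quadratic covariance constraint from the linear channel constraint by a square-root substitution. Let $\bm{R}^{1/2} \in \mathcal{S}_M^+$ be the positive-semidefinite square root of $\bm{R}$ and seek $\bm{W}$ in the form $\bm{W} = \bm{R}^{1/2}\bm{Q}$ for an $M \times (K+M)$ matrix $\bm{Q}$. Then $\bm{W}\bm{W}^H = \bm{R}^{1/2}\bm{Q}\bm{Q}^H\bm{R}^{1/2}$, so \eqref{eq-3-23b} is met the moment $\bm{Q}$ has orthonormal rows, $\bm{Q}\bm{Q}^H = \bm{I}_M$ (this works even when $\bm{R}$ is singular, since then $\bm{R}^{1/2}\bm{I}_M\bm{R}^{1/2} = \bm{R}$). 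It remains to enforce $\bm{H}\bm{W} = \bm{H}\bm{R}^{1/2}\bm{Q} = \bm{F}$. Writing $\bm{G} := \bm{H}\bm{R}^{1/2}$, hypothesis \eqref{eq-3-24} becomes $\bm{G}\bm{G}^H = \bm{F}\bm{F}^H =: \bm{P}$, and since $\bm{F}$ has full row rank $K$ the matrix $\bm{P}$ is positive definite. Whitening by $\bm{P}^{-1/2}$, the matrices $\bm{\Phi} := \bm{P}^{-1/2}\bm{G}$ and $\bm{\Psi} := \bm{P}^{-1/2}\bm{F}$ both acquire orthonormal rows, $\bm{\Phi}\bm{\Phi}^H = \bm{\Psi}\bm{\Psi}^H = \bm{I}_K$, and the target collapses to the clean requirement $\bm{\Phi}\bm{Q} = \bm{\Psi}$.

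The crux is then to produce a single $\bm{Q}$ that is simultaneously semi-unitary and maps $\bm{\Phi}$ to $\bm{\Psi}$. I would do this by unitary completion: extend the $K$ orthonormal rows of $\bm{\Phi}$ to an $M \times M$ unitary $\bm{U}$ whose first $K$ rows are $\bm{\Phi}$, extend the $K$ orthonormal rows of $\bm{\Psi}$ to a $(K+M) \times (K+M)$ unitary $\bm{V}$ whose first $K$ rows are $\bm{\Psi}$, and set $\bm{Q} = \bm{U}^H \left[\bm{I}_M \ \ \bm{0}_{M \times K}\right] \bm{V}$. A direct check gives $\bm{Q}\bm{Q}^H = \bm{U}^H \bm{U} = \bm{I}_M$, and, using $\bm{\Phi} = \left[\bm{I}_K \ \ \bm{0}_{K \times (M-K)}\right]\bm{U}$ (valid since $K < M$), also $\bm{\Phi}\bm{Q} = \left[\bm{I}_K \ \ \bm{0}_{K \times M}\right]\bm{V} = \bm{\Psi}$. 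Unwinding the substitutions, $\bm{W} = \bm{R}^{1/2}\bm{Q}$ satisfies $\bm{W}\bm{W}^H = \bm{R}$ and $\bm{H}\bm{W} = \bm{P}^{1/2}\bm{\Phi}\bm{Q} = \bm{P}^{1/2}\bm{\Psi} = \bm{F}$, which finishes the construction.

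I expect the unitary-completion step to be the main obstacle, since it is the only place where the two a priori unrelated constraints — the quadratic covariance identity and the linear channel identity — are genuinely reconciled; everything hinges on the column count $K+M$ of $\bm{F}$ being large enough to host an $M$-row semi-unitary $\bm{Q}$, i.e.\ on $M \leq K+M$, which is automatic. A secondary point to verify with care is the positive-definiteness of $\bm{P} = \bm{F}\bm{F}^H$, which is precisely what the full-rank assumption on $\bm{F}$ supplies and what legitimizes the whitening by $\bm{P}^{-1/2}$; without it the reduction to orthonormal-row matrices would not be available.
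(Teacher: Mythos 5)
Your proof is correct, and while it shares the paper's overall skeleton (necessity is immediate from $\bm{H}\bm{W}\bm{W}^H\bm{H}^H = \bm{F}\bm{F}^H$; sufficiency constructs $\bm{W}$ as a square root of $\bm{R}$ times a semi-unitary factor), the mechanism you use to produce that semi-unitary factor is genuinely different. The paper takes the Cholesky factor $\bm{L}_r$ of $\bm{R}$, applies row QR decompositions to both $\bm{H}\bm{L}_r$ and $\bm{F}$ to obtain lower-triangular factors $\bm{L}_h$ and $\bm{L}_f$, and then invokes the \emph{uniqueness of the Cholesky factorization} of the common positive-definite Gram matrix $\bm{H}\bm{R}\bm{H}^H=\bm{F}\bm{F}^H$ to conclude $\bm{L}_h=\bm{L}_f$, from which $\bm{W}=\bm{L}_r\bm{Q}_h^H\hat{\bm{Q}}_f$ follows. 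You instead whiten both $\bm{H}\bm{R}^{1/2}$ and $\bm{F}$ by $(\bm{F}\bm{F}^H)^{-1/2}$ so that each has orthonormal rows, and then reconcile them by unitary completion, i.e., by the fact that any two semi-unitary matrices with the same number of rows are related by a semi-unitary right factor. Both routes ultimately exploit the same underlying fact — two factorizations of one Gram matrix differ by a semi-unitary factor — but yours avoids triangular canonical forms and the uniqueness argument, at the cost of requiring the (harmless) inversion $(\bm{F}\bm{F}^H)^{-1/2}$, which your full-rank hypothesis on $\bm{F}$ justifies exactly as the paper's positive-definiteness remark does. The paper's QR-based construction has the practical advantage of being directly implementable with standard decompositions (it is what Algorithm~2 computes), whereas your argument is slightly cleaner as a pure existence proof and makes the role of the $K<M$ assumption (extendability of $\bm{\Phi}$ to a unitary) explicit. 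Both correctly handle a possibly singular $\bm{R}$.
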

	\begin{proof}
		See Appendix \ref{app:Proof2}.
	\end{proof}
	
	Theorem~\ref{thm2} indicates that constraints \eqref{eq-3-23b} and \eqref{eq-3-23d} are equivalent to 
	\begin{equation}\label{eq:HRH}
	\bm{H} \bm{R} \bm{H} ^ H = \mathrm{diag} \left(\bm{p}\right),
	\end{equation}
	by letting $ \bm F =  \left[ \mathrm{diag}( \sqrt{\bm{p}}),\bm{0}_{K \times M} \right] $. 
	Using  \eqref{eq:HRH}, the globally optimal $\bm R$ to \eqref{eq-3-23} is found by  
	\begin{subequations} \label{eq-3-34}
		\begin{align}
		\min_{\bm{R}, \alpha }  \ & \  L_r(\bm{R}, \alpha ) \label{eq:zf-iu:objectives} \\
		\mathrm{subject \ to} \  & \ \bm{R}  \in \mathcal{S}_+ ^ M , \ \bm{H R H}^ H =  \mathrm{diag} \left(\bm{p}\right),  \label{eq-3-34b} \\
		& \ \left[\bm{R}\right]_{m,m} = P_t / M , \ m = 1,\ldots,M ,  \\
		& \ \frac{1}{\Gamma} {p}_k \geq  \sigma ^ 2, \  k = 1,\ldots,K. \label{eq:optII:convex:SINRconstraint}
		\end{align}
	\end{subequations} 
	Similar to \eqref{eq-5-rank}, the optimization   \eqref{eq-3-34} is a  convex QSDP, and the global optimum of \eqref{eq-3-34} can be obtained in polynomial time. As we show in the sequel, the overall complexity of ZF beamforming is substantially lower than that of recovering the global optimum via Algorithm~\ref{A-SDR}.
	
	The solution of \eqref{eq-3-34}, i.e., the matrix $\tilde {\bm R}$ and the vector $\tilde {\bm p}$, are used to construct the optimal precoding matrix $\tilde {\bm W}$, As detained in the proof of  Theorem~\ref{thm2}. 
	Here, we briefly give the final expressions.  
	First, we recover an $M\times M$ matrix $\bm L_r$  which satisfies $\tilde {\bm R}= \bm L_r \bm L_r^H$. This can be obtained using, e.g.,  Cholesky decomposition, though $\bm{L}_r$ does not have to be triangular and any matrix satisfying $\tilde {\bm R}= \bm L _r \bm L_r^H$ may be used to calculate $\tilde {\bm W}$.
	Then, the resulting precoder $\tilde {\bm R}$ is
	\begin{equation}\label{eq:WLQhQf}
	\tilde {\bm W} = \bm L_r \bm Q_h^H \left[\bm{Q}_f^T\right]^T_{{1:M}},
	\end{equation}
	where 
	$\bm Q_h$ and $\bm Q_f$ are obtained by applying row QR decomposition to $\bm H \bm L_r$ and $\bm F$, respectively. 
	Since  $\bm{F} =  \left[ \mathrm{diag} (\sqrt{\tilde{\bm{p}}}), \bm{0}_{K \times M} \right]$ is diagonal, it holds that $\bm Q_f = \bm I_{M+K}$, and thus \eqref{eq:WLQhQf} is simplified to
	\begin{equation} \label{eq:WLrQh} 
	\tilde {\bm W}
	=  \left[ \bm L_r \bm Q_h^H , \bm 0_{M \times K} \right]. 
	\end{equation}
	According to the proof of Theorem~\ref{thm1}, $\tilde {\bm W}, \tilde {\bm R}$ is a feasible solution to \eqref{eq-3-23}. Since $\tilde {\bm R}$ is the global optimum to \eqref{eq-3-23}, $\tilde {\bm W}$ is also globally optimal to \eqref{eq-3-23}. 
	

	The resulting ZF beamforming method is summarized below as Algorithm~\ref{A1}.
	The main computational burden in Algorithm~\ref{A1} stems from solving the QSDP problem \eqref{eq-3-34}, as is also the case in Algorithm~\ref{A-SDR}.
	Given a solution accuracy $\epsilon$, the worst case complexity to solve the  QSDP problem \eqref{eq-3-34} with the primal-dual interior-point algorithm   in \cite{nie_predictorcorrector_2001, toh_inexact_2008 } is $ \mathcal{O} ( M ^ {6.5} \log (1 / \epsilon) )$.
	Compared to the recovering the global solution via the SDR-based Algorithm~\ref{A-SDR}, the worst-case computation complexity for ZF beamforming is lower by a factor of $K^{6.5}$. This computational complexity reduction stems from the fact that the optimization problem  \eqref{eq-3-34} involves only one semidefinite constraint, while the problem \eqref{eq-5-rank}, from which Algorithm~\ref{A-SDR} originates, involves $K+2 = \mathcal{O} (K)$ such constraints.

	\begin{algorithm}
		
		\begin{algorithmic}[1]
			\Require 
			\Statex \hspace{-1.5em} Total transmit power $P_t$;      
			\Statex \hspace{-1.5em}   Power of AWGN at users $\sigma^2$;            
			\Statex \hspace{-1.5em}   Expression of the MIMO radar loss function $L_r(\bm{R}, \bm{\alpha})$;            
			\Statex \hspace{-1.5em}   Instantaneous downlink channel $\bm{H}$;           
			\Statex \hspace{-1.5em}  SINR threshold $\Gamma$.	
			\Ensure 
			\Statex \hspace{-1.5em} The overall precoding matrix $\tilde {\bm W}$.
			\Statex{ \hspace{-2em} \textbf{Steps: } }
			\State   Compute the optimal $\tilde {\bm R}$, and $\tilde {\bm p}$ by solving optimization problem \eqref{eq-3-34} with convex optimization solvers. 
			\State Compute the Cholesky decomposition of $\tilde {\bm R}$ as $ \tilde {\bm R} = \bm{L}_r \bm{L}_r ^ H$.
			\State Given $\bm{H L}_r$, calculate $\bm Q_h$ with the row QR decomposition \eqref{eq-3-26} shown  later in  Appendix \ref{app:Proof2}.
			\State Compute the overall precoding matrix  $\tilde {\bm W}$ using \eqref{eq:WLrQh}.

			\caption{Joint transmit beamforming via ZF}
			\label{A1}
		\end{algorithmic}
	\end{algorithm}

	We next discuss how the selection of the SINR threshold $\Gamma$ affects trade-off between communications and radar when using ZF beamforming. As noted in the discussion following the original optimization problem \eqref{eq-3-22}, the radar loss function here decreases as $\Gamma$ increases, i.e., the less restrictive the communication constraints are, the better the radar functionality can perform.
	However, there are two phenomenons which are explained in the sequel, that are different under ZF beamforming compared to the original optimization problem \eqref{eq-3-22}: 1) When $\Gamma$ approaches zero, the radar performance achieved  is generally different from the radar-only optimal performance; 2) The radar loss function and the obtained fairness SINR remain constant  if $\Gamma$ is lower than some positive value. 

	To understand phenomenon 1), we specialize the ZF optimization problem \eqref{eq-3-34} for the case of $ \Gamma = 0 $, resulting in
	\begin{subequations} \label{eq-3-37}
		\begin{align}
		\min_{\bm{R}, \bm{p} ,  \alpha}  \ & \  L_r(\bm{R}, \alpha )  \\
		\mathrm{subject \ to} \  & \ \bm{R}   \in \mathcal{S}_{M}^+, \\
		& \ \left[\bm{R}\right]_{m,m} = P_t / M , \ m = 1,\ldots,M,   \\
		& \ \bm{H R H}^ H =  \mathrm{diag} \left(\bm{p}\right). \label{eq:diagHRH}
		\end{align}
	\end{subequations} 
	Here, we note that this formulation is  distinct from the radar-only optimization problem \eqref{eq:radaropt}, since, even when the SINR can take any value, we still force the interference to be cancelled. This restriction is reflected in the  additional constraint \eqref{eq:diagHRH}  imposed on $\bm{R}$, namely that $\bm{H} \bm{R} \bm{H} ^ H$ should be a diagonal matrix.
	The optimal radar-only covariance $\bm{R}_0$, which is not forced to satisfy this interference cancelling constraint, generally does not satisfy it, i.e.  $\bm{H} \bm{R}_0 \bm{H} ^ H$ is not a diagonal matrix.
	If $\bm{R}_0$ is not a feasible solution of problem \eqref{eq-3-37}, the radar-only optimal performance  cannot be achieved. 
	
	In order to explain phenomenon 2), we again focus on the ZF optimization specialized to the case of no SINR constraints in  \eqref{eq-3-37}, and denote its solution by $\{ \bm{R}_{\rm II}, \alpha_{\rm II},  \bm {p}_{\rm II}\}$. 
	Given $\bm {p}_{\rm II}$, the resulting fairness SINR is given by 
	\begin{equation}
	\Gamma_{\rm II} = \min\{ \bm{p}_{\rm II} \}/ \sigma ^ 2.
	\end{equation}
	In  problem \eqref{eq-3-34}, if the given $\Gamma$ is not greater than $ \Gamma_{\rm II}$, i.e. $0 \le \Gamma \le \Gamma_{\rm II}$, the constraint \eqref{eq:optII:convex:SINRconstraint} always holds and can be regarded as being invariant to the actual solution $\{ \bm{R}_{\rm II}, \alpha_{\rm II},  \bm {p}_{\rm II}\}$.
	In this case, $\{ \bm{R}_{\rm II}, \alpha_{\rm II},  \bm {p}_{\rm II}\}$ is still a feasible solution for \eqref{eq-3-34}. Thus, the minimized radar loss function is  equal to $L_r(\bm{R}_{\rm II}, \alpha_{\rm II})$, and the obtained fairness SINR is equal to $\Gamma_{\rm II}$, for ZF beamforming derived under every SINR constraint  satisfying $\Gamma \le \Gamma_{\rm II}$.
	
	Here we compare the two proposed  beamforming methods.     
	The key difference between them is whether to completely eliminate the interference.
	As a globally optimal method, the radar performance of SDR beamforming should be better than that of the sub-optimal ZF beamforming under the same communication requirement.
	However, the performance gap may become small with reasonably large $\Gamma$, because the interference is expected to be eliminated under strict constraint on SINR. In this case, ZF beamforming is  preferable since its corresponding QSDP problem has a much simpler form.
	To explain the performance gap when the given $\Gamma$ is low, we note that it is unnecessary to completely eliminate the interference, which restricts the precoder in a null space. Thanks to the more degrees of freedom for designing $\bm W$, SDR beamforming enjoys better radar performance.
	In addition, as $\Gamma$ goes to zero, the radar performance of SDR beamforming goes to the optimal radar-only performance, while the radar performance of ZF beamforming cannot and stays constant when $\Gamma$ is lower than some value.

	\section{Numerical results} \label{section5}
	
	In this section we numerically evaluate the proposed joint beamforming methods, i.e. SDR beamforming (Algorithm~\ref{A-SDR}) and ZF beamforming (Algorithm~\ref{A1}), in a simulation study. We begin by analyzing the achievable radar beampattern of the proposed schemes, compared to the DFRC method of \cite{liu_mu-mimo_2018} in Subsection~\ref{subsec-beam-pattern}. Then, we compare ZF and SDR beamforming  in Subsection~\ref{subsec:ZF_SDR}, while the comparison of SDR beamforming and the DFRC method of \cite{liu_mu-mimo_2018} in terms of their inherent radar-communication tradeoffs is presented in Subsection~\ref{subsec:SDR_SSP}.
	
	In the experiments reported in this section, we use the following settings:
	The transmit array is a uniform linear array with half wavelength spaced elements. 
	The number of transmit antennas is $M = 10$ and the total transmit power $P_t = 1$.
	For MIMO radar transmit beamforming, the ideal beam pattern consists of three main beams, whose the directions  are $\overline{ \theta}_1 = -40 ^ \circ$, $\overline{ \theta}_2 = 0 ^ \circ$ and $\overline{ \theta}_3 = 40 ^ \circ$.
	The width of each ideal beam is $\Delta = 10 ^ \circ$, and thus the desired   beam pattern is 
	\begin{equation}
	d(\theta) = \left\{
	\begin{array}{cl}
	1, & \overline{ \theta}_p - \frac{\Delta}{2}  \leq \theta \leq   \overline{ \theta}_p + \frac{\Delta}{2},\ p = 1,2,3,  \\
	0, &  \mathrm{otherwise}.
	\end{array}
	\right.
	\label{eqn:IdealBp}
	\end{equation}
	In  \eqref{eq-2-7}, the direction grids $\left\{\theta_l \right\}_{l = 1} ^ L$ are obtained by uniformly sampling the range of $-90 ^ \circ$ to $ 90 ^ \circ $ with resolution of $0.1 ^ \circ$.
	The radar loss in  \eqref{eq-2-9} accounts for both objectives equally, namely, the weighting factor is set to $w_c = 1$.
	The multi-user communications channel obeys a Rayleigh fading model, i.e., the entries of $\bm{H}$ are i.i.d. standard complex normal random variables, and the channel output at each user is corrupted with an additive white Gaussian noise of variance  $\sigma^2 = 0.01$.
	
	In our simulations we use SINR threshold values $\Gamma$ varying from $4  \mathrm{dB}$ to $24  \mathrm{dB}$, and number of users simulated is $K = 2,4,6$.
	We simulate different $\Gamma$ and $K$ to test the impact of these parameters on the performance of the proposed joint beamforming methods.
	For each value of $\Gamma$ and $K$, the performance is averaged over $1000$ Monte Carlo tests. 
	The individual radar waveform and communication symbols comprising the transmitted signal $\bm x[n]$ in \eqref{eq-1-1} are generated as random quadrature-phase-shift-keying modulated sequences, and the transmit signal block size set to is $N = 1024$.
	
	The MATLAB CVX toolbox \cite{cvx, gb08} is used to solve the QSDP problems  \eqref{eq-5-rank} and \eqref{eq-3-34}.
	We compare our joint beamforming schemes with the DFRC beamforming method proposed in \cite{liu_mu-mimo_2018}, in which only communication symbols are precoded.
	Specially, we use gradient projection method to solve the sum-square penalty (SSP) problem under per-antenna constraint in \cite{liu_mu-mimo_2018}.
	In the sum-square penalty problem in \cite{liu_mu-mimo_2018}, the weighing factors are $\rho_1 = 1$, $\rho_2 = 2$ and the given SINR at each user is equal to the SINR threshold $\Gamma$ in \eqref{eq-3-22} .

	\subsection{MIMO Radar Transmit Beam Pattern} 
	\label{subsec-beam-pattern}
	
	First, we numerically evaluated the MIMO radar transmit beam patterns $P(\theta; \bm R)$ defined in \eqref{eq-2-3} for SDR beamforming, ZF beamforming, and the SSP approach  \cite{liu_mu-mimo_2018}.
	The transmit beam patterns for $\Gamma = 12 $ dB   are depicted in Fig.~\ref{fig:p1} for  $K = 2$ and in Fig.~\ref{fig:p2} for  $K = 4$.
	The optimal radar-only beam pattern, obtained from \eqref{eq:radaropt}, are also evaluated for comparison. 
	
	Observing Fig.~\ref{fig:p1}, we note that when $K = 2$, the average beam pattern for SDR beamforming and ZF beamforming approaches that of the optimal radar-only beamforming, while the SSP beamformer of \cite{liu_mu-mimo_2018} only  synthesizes two main beams towards $ 0 ^ \circ $ and $40 ^ \circ$.
	The fact that the SSP beamformer is unable to steer three main beams for $K = 2$ stems from its decreased MIMO radar DoF. In particular, as noted in  \eqref{eq-2-3}, the MIMO transmit beam pattern is determined by the covariance of transmit waveform, and thus the DoF for MIMO radar transmit beamforming is given by the rank of  covariance matrix.
	In the SSP approach, only communication symbols are precoded and thus the DoF cannot be larger than the number of users $K$, namely, the rank of covariance matrix cannot exceed $K$.
	In our scheme however, both communication symbols and radar waveform are precoded, and thus the DoF can be as high as its maximal value $M$, i.e., the covariance can have full rank. 
	Numerically solving \eqref{eq:radaropt} using the CVX toolbox reveals that rank of the optimal radar-only covariance $\bm{R}_0$ is $4$.
	In other words, the required DoF to achieve the optimal performance of radar is $4$.
	As a result, if $K = 2 < 4$, the SSP approach does not have enough DoF to form  three main beams as in the optimal radar beam pattern, explaining the degraded beam pattern observed in Fig.~\ref{fig:p1}.
	Our scheme are capable of forming beam patterns which are close to the optimal radar beam pattern, since the available DoF in our schemes is $M = 10 > 4$.
	When $K = 4$, the SSP approach has enough DoF and is thus capable of forming a beam pattern comparable to the optimal radar beam pattern, as shown in Fig.~\ref{fig:p2}.

	We also observe in Figs.~\ref{fig:p1}-\ref{fig:p2} that the main-lobe power of the ZF beamforming is lower than that of the SDR beamforming, implying an expected radar performance loss for ZF beamforming compared to SDR beamforming.
	This follows since when the SINR threshold $\Gamma$ is not very high, one can achieve the desired SINR level without canceling the interference, allowing to further optimize the radar beam pattern by proper optimization.
	In ZF beamforming, the interference is completely eliminated regardless of the SINR threshold, namely $\bm{H W} =  \left[ \mathrm{diag}( \sqrt{\bm{p}}),\bm{0}_{K \times M} \right]$ even if $\Gamma$ is low.
	This additional constraint limits the DoF of $\bm{W}$ and introduces the radar performance loss compared to the SDR beamforming.
	Nevertheless, in order to fully compare  ZF beamforming to SDR beamforming, one must also account for the communication performance. 
	In particular, ZF beamforming can provide improved communication rates compared SDR beamforming due to the fact that it completely eliminates the interference regardless of the specified SINR threshold.
	Our numerical results detailed in the sequel show that the obtained SINR of ZF beamforming may be much higher than $\Gamma$, while the obtained SINR by the SDR beamforming is generally quite close to $\Gamma$.
	To understand the inherent tradeoffs of the proposed schemes, in the following subsection we compare SDR and ZF beamforing in terms of both their radar and communication performance measures.
	
	
	\begin{figure} 
		\centering
		\includegraphics[width=0.9\linewidth]{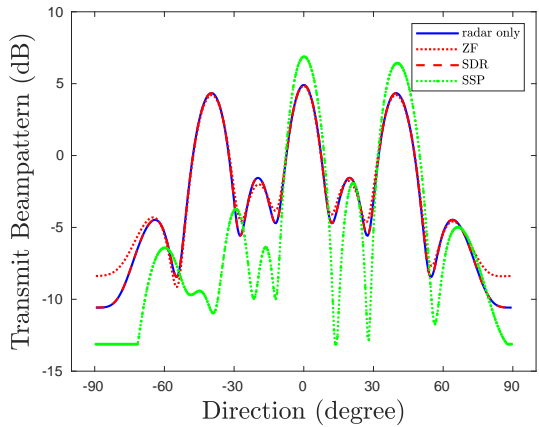}
		\caption{Transmit beam pattern of MIMO radar, for $\Gamma = 12$ dB and $K = 2$.}
		\label{fig:p1}
	\end{figure}
	
	\begin{figure} 
		\centering
		\includegraphics[width=0.9\linewidth]{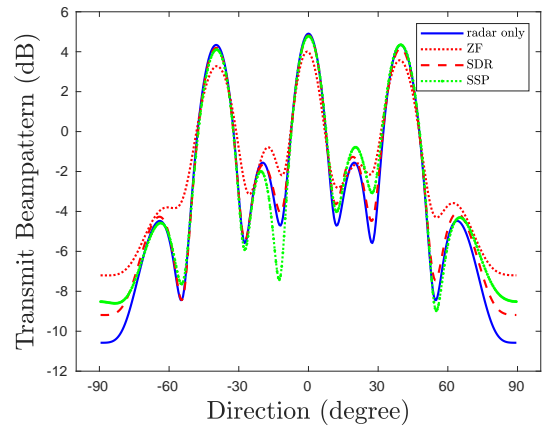}
		\caption{Transmit beam pattern of MIMO radar, for $\Gamma = 12$ dB and $ K = 4$.}
		\label{fig:p2}
	\end{figure}
	
	%

	\subsection{Comparison Between ZF and SDR Beamforming} 
	\label{subsec:ZF_SDR}
	
	In this subsection, we compare the radar performance and communication performance of SDR and ZF beamforming. 
	Radar performance is evaluated using the beam pattern MSE, defined as the MSE between the obtained MIMO radar transmit beam pattern and the optimal radar-only beam pattern, and is written as
	\begin{equation}  
	\label{eq-beam-mse}
	\mathrm{MSE} = \frac{1}{L} \sum _ {l =1} ^ L  \left| P(\theta_l; \bm{R}_0) - P(\theta_l; \bm{R}) \right| ^ 2,
	\end{equation}
	where $ P(\theta_l; \bm{R}_0) $ is the optimal radar-only beam pattern with $\bm R_0$ obtained from \eqref{eq:radaropt}. 
	Low beam pattern MSE indicates improved MIMO radar transmit beamforming.   The numerically compared beam pattern MSE values versus the SINR threshold $\Gamma$ are depicted in Fig.~\ref{fig:p4}. 
	As expected, the beam pattern MSE increases with the increment of  $\Gamma$, implying that the more restrictive SINR demands naturally come at the cost of radar performance.   
	The results in Fig.~\ref{fig:p4} validate three characters of the two proposed joint beamforming schemes: 1) SDR beamforming achieves improved radar performance compared the the sub-optimal ZF strategy; 2) The performance gap between the two methods notably narrows at high SINR constraints, i.e., as $\Gamma$ increases; 3) When $\Gamma$ is lower than some value, radar performance of the ZF beamforming stays constant, as discussed in Subsection~\ref{sub:zf:FcFr}. 
	It is also observed Fig.~\ref{fig:p4} that the more communication receivers the DFRC system has to communicate with reliably, i.e., as $K$ increases, the higher the beam pattern MSE is, again indicating the inherent tradeoff between radar and communications in DFRC systems.
	In particular, it is observed that the impact of $K$ on the beam pattern MSE is more significant than the impact of $ \Gamma $, namely, the demand to support an increased number of users is more restrictive in terms of radar performance compared to the requirement to provide improved SINR at each user.
	
	The communication performance is evaluated in terms of the achievable sum rate defined in \eqref{eq-sum-rate}. The resulting values are depicted in  Fig.~\ref{fig:sr}, where we observe that ZF beamforming achieves higher communciation rate compared to SDR beamforming, despite its performance loss for radar. This follows since, as discussed in the previous subsection, ZF beamforming typically yields SINR values higher than the imposed threshold $\Gamma$, as it nullifies the interference regardless of the value of $\Gamma$.
	Conversely, SDR beamforming, which aims at improving radar performance without imposing any structure on the resulting interference, does so by tunning its SINR to be  close to the threshold $\Gamma$, allowing to further improve radar performance without violating the SINR constraint. Hence, the achievable sum rate of SDR beamforming demonstrates an approximate linear increase with the SINR constraint $\Gamma$ in  Fig.~\ref{fig:sr}.


	From Figs.~\ref{fig:p4}  and  \ref{fig:sr}, it is observed  that performance of the two methods  coincides as $\Gamma$ increases.
	For large values of $\Gamma$, the interference tends to be naturally eliminated by the SDR beamforming in order to meet the SINR constraints. To demonstrate this property, we depict in Fig.~\ref{fig:p5}  the interference-to-noise ratio at the first user versus SINR threshold for SDR beamforming.
	Observing Fig.~\ref{fig:p5}, we note that when $\Gamma$ is high enough, the  interference power becomes much dominant  than the noise power, and thus the interference can be effectively ignored.
	Therefore, under high SINR conditions, it is reasonable to  completely eliminate the interference, and  ZF beamforming is asymptotically optimal.

	%
	
	\begin{figure}
		\centering
		\includegraphics[width=0.9\linewidth]{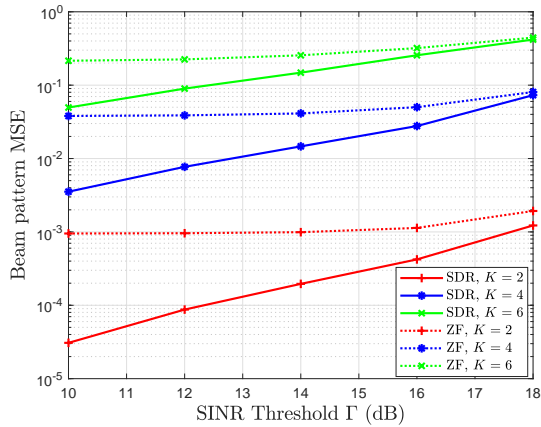}
		\caption{Beam pattern MSE versus SINR threshold $\Gamma$.}
		\label{fig:p4}
	\end{figure}
	
	\begin{figure}
		\centering
		\includegraphics[width=0.9\linewidth]{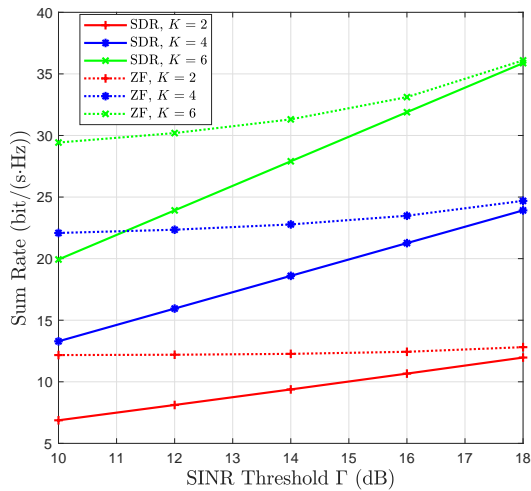}
		\caption{Achievable sum rate versus SINR threshold $\Gamma$.}
		\label{fig:sr}
	\end{figure}
	
	\begin{figure}
		\centering
		\includegraphics[width=0.9\linewidth]{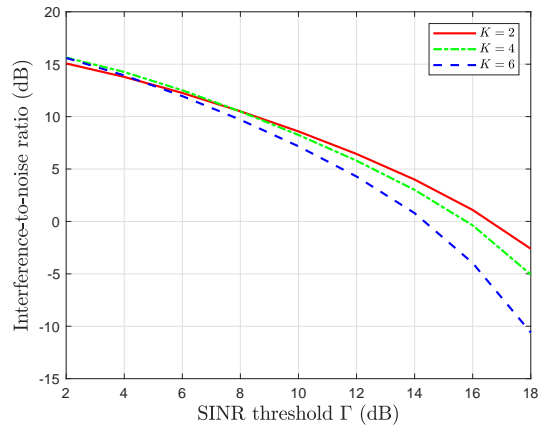}
		\caption{Radar-interference-to-noise ratio versus SINR threshold $\Gamma$, SDR beamforming.}
		\label{fig:p5}
	\end{figure}

	
	%
	We note that optimization problems tackled by SDR beamforming \eqref{eq-5-rank} and ZF beamforming \eqref{eq-3-23} are not always feasible.
	As the total transmit power is fixed to be $P_t$, the signal power at users should have an upper bound. 
	Correspondingly, the achievable feasible SINR should have an upper bound.
	If the given $\Gamma$ is too high, the joint beamforming problem \eqref{eq-5-rank}  and \eqref{eq-3-23}  may become infeasible.
	To calculate the feasible probability under a given $\Gamma$ and $K$, we ran multiple Monte Carlo tests, randomizing a new channel realization in each test. 
	The feasible  probability is calculated by dividing the number of feasible tests by the total number of tests.
	
	The relationship between the  feasible probability  and $\Gamma$  is demonstrated in Fig.~\ref{fig:p6}, for $K = 2,4,6$.
	It is observed in Fig.~\ref{fig:p6} that the feasible probability is roughly the same for ZF and SDR beamforming, and that both curves decrease as   the number of users and SINR threshold increases.
	This implies that our optimization approach may fail with very high SINR restrictions, and thus for practical applications, the SINR threshold should be carefully set.
	If the given threshold is too high, the two problems may be infeasible and our method will fail to return any meaningful solution. 
	Nevertheless, this result shows that the feasibility can almost always be ensured if $\Gamma$ is lower than some value under Rayleigh channel.
	We also note that this infeasible situation can be avoided if one changes the SINR constraints into a part of penalty functions, i.e.   \eqref{eq:sdr:objectives} or \eqref{eq:zf-iu:objectives}, as done in the scheme in \cite{liu_mu-mimo_2018}. We leave the analysis of this modification for future investigation.

	
	\begin{figure}
		\centering
		\includegraphics[width=0.9\linewidth]{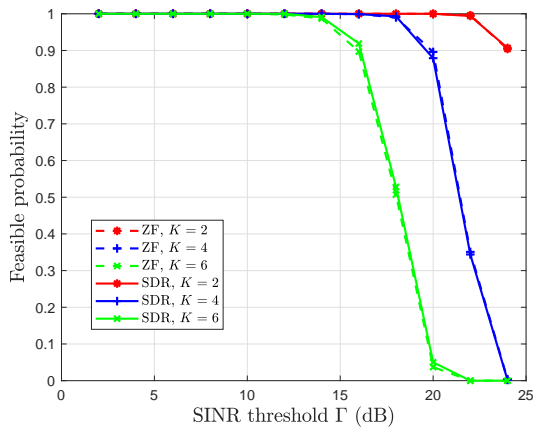}
		\caption{Feasible probability the SDR beamforming  \eqref{eq-5-rank} and ZF beamforming \eqref{eq-3-23} versus SINR threshold $\Gamma$.}
		\label{fig:p6}
	\end{figure}
	
	\subsection{Comparing SDR beamforming with SSP DFRC Method}
	\label{subsec:SDR_SSP}
	Finally, we compare our proposed SDR beamforming method to the SSP DFRC scheme  previously proposed in \cite{liu_mu-mimo_2018}. To that aim, we evaluate their tradeoff between the communication performance, encapsulated in the achieved fairness SINR defined in \eqref{eq-fainess-sinr} and the radar   beam pattern MSE  defined in \eqref{eq-beam-mse}. The numerically evaluated tradeoffs for number of users $K=2,4,6$ are depcited in Fig.~ref{fig:p8}. 
	As discussed in Subsection~\ref{subsec-beam-pattern}, our scheme notably outperform the SSP approach for $K = 2$, as clearly demonstrated in \ref{fig:p8}.  
	When $K=4,6$, our SDR beamforming technique still outperforms the SSP approach, although the gain is less notable compared to $K=2$.
	The fact that SDR beamforming outperforms the SSP method of \cite{liu_mu-mimo_2018} even when the latter is capable of exploiting the full MIMO radar DoF stems from the following reasons: 1) The SSP problem is non-convex and the obtained solution may be a local optimum; 2) In the SSP problem, the radar lost function, defined as $\| \bm{R}-\bm{R}_0 \|_F ^ 2 $, does not directly reflect the performance of radiation beam pattern.

	\begin{figure}
		\centering
		\includegraphics[width=0.9\linewidth]{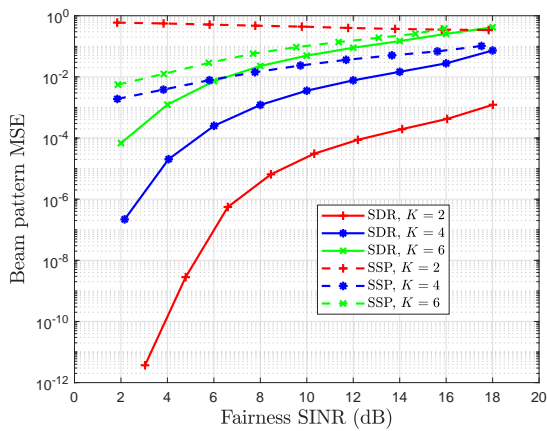}
		\caption{Beam pattern MSE versus SINR threshold $\Gamma$.}
		\label{fig:p8}
	\end{figure}
	
	Since the beam pattern MSE is not the only performance measure for radar, we also analyze the  sensing capabilities at radar receiver.
	To extract the range and angular profile of radar targets from the received radar signal, we first perform range compression \cite{li_range_2008} to obtain the range profile, and then  use the least square (LS) Capon method  \cite{xu_radar_2006,xu_target_2008} to calculate the  spatial spectrum in each range resolution bin.
	
	The first simulation is conducted to examine the range resolution and angular resolution of the MIMO radar.  
	In the simulation, there are five targets in the field of view of radar.
	The coordinate of targets in radar polar coordinate system is defined by the discrete time delay $n'$ (or the range resolution bin index) and the angular direction $\theta$ as defined in \eqref{eq-radar-receive}.
	In our parameter setting, the coordinate of these targets are $(10, 0^\circ)$, $(20, -40^\circ)$, $(20, 0^\circ)$, $(20, 40^\circ)$ and $(30, 0^\circ)$, respectively, and the  complex amplitude $\beta$ in \eqref{eq-radar-receive} for each target is $1$.
	The radar receive signal is corrupted with Gaussian noise with covariance $\bm R _ v = \sigma_r ^ 2 \bm{I} $, where $\sigma_r^2= 1$.
	The Capon spatial spectrum at the $20$-th range resolution bin and the range profile at direction $0^ \circ $ in one test are demonstrated in Fig.~\ref{fig:range_angle}, for $K = 2$ and $\Gamma = 12$dB.
	In \ref{fig:range_angle}, the range profile and Capon spatial spectrum for the radar-only case, the SSP approach and SDR beamforming are compared.
	From Fig.~\ref{fig:range_angle}, it is observed that the range and angular resolution for SDR beamforming is close to that for the radar-only case.
	The performance degradation of the SSP approach resulting from the lack of radar DoF  is significant, since the SSP approach cannot form a notable peak around the coordinate $(20, 0 ^ \circ)$, see Figs.~\ref{fig:p21} and \ref{fig:p22},    and the  amplitude estimation error at the $20$-th range resolution bin is very large, see Fig.~\ref{fig:p22}.
	When $K=2$,  the reflected signal from the three targets at the $20$-th range resolution bin in the SSP approach are linearly dependent, and thus the cross correlation defined in  \eqref{eq-2-3-1} cannot be suppressed effectively.
	Therefore, the performance of adaptive MIMO radar processing technique for the SSP approach  decreases significantly.

	\begin{figure} [htbp]
		\centering
		\subfigure[]{
			\label{fig:p11}
			\includegraphics[width=0.48\linewidth]{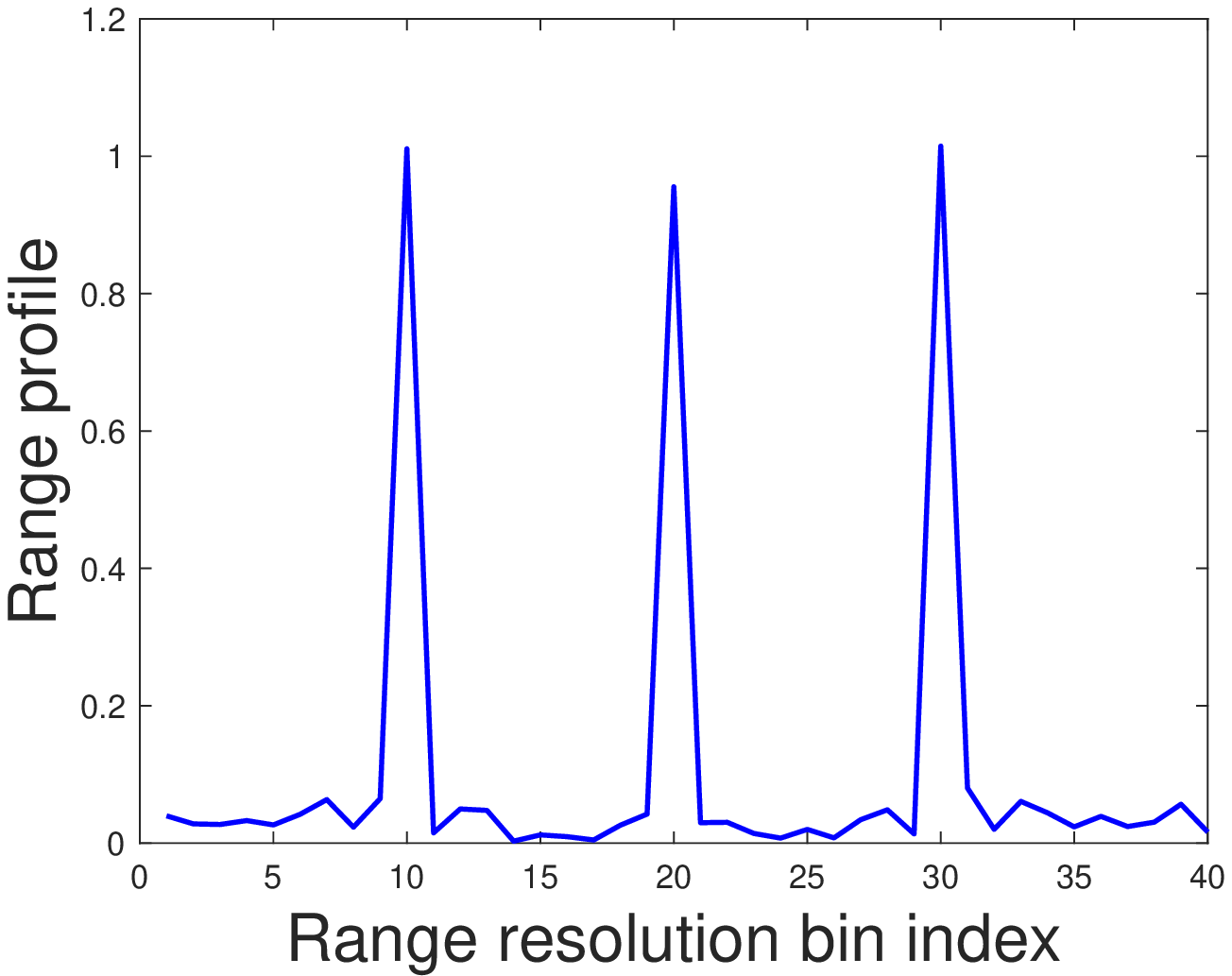}}
		\subfigure[]{
			\label{fig:p12}
			\includegraphics[width=0.48\linewidth]{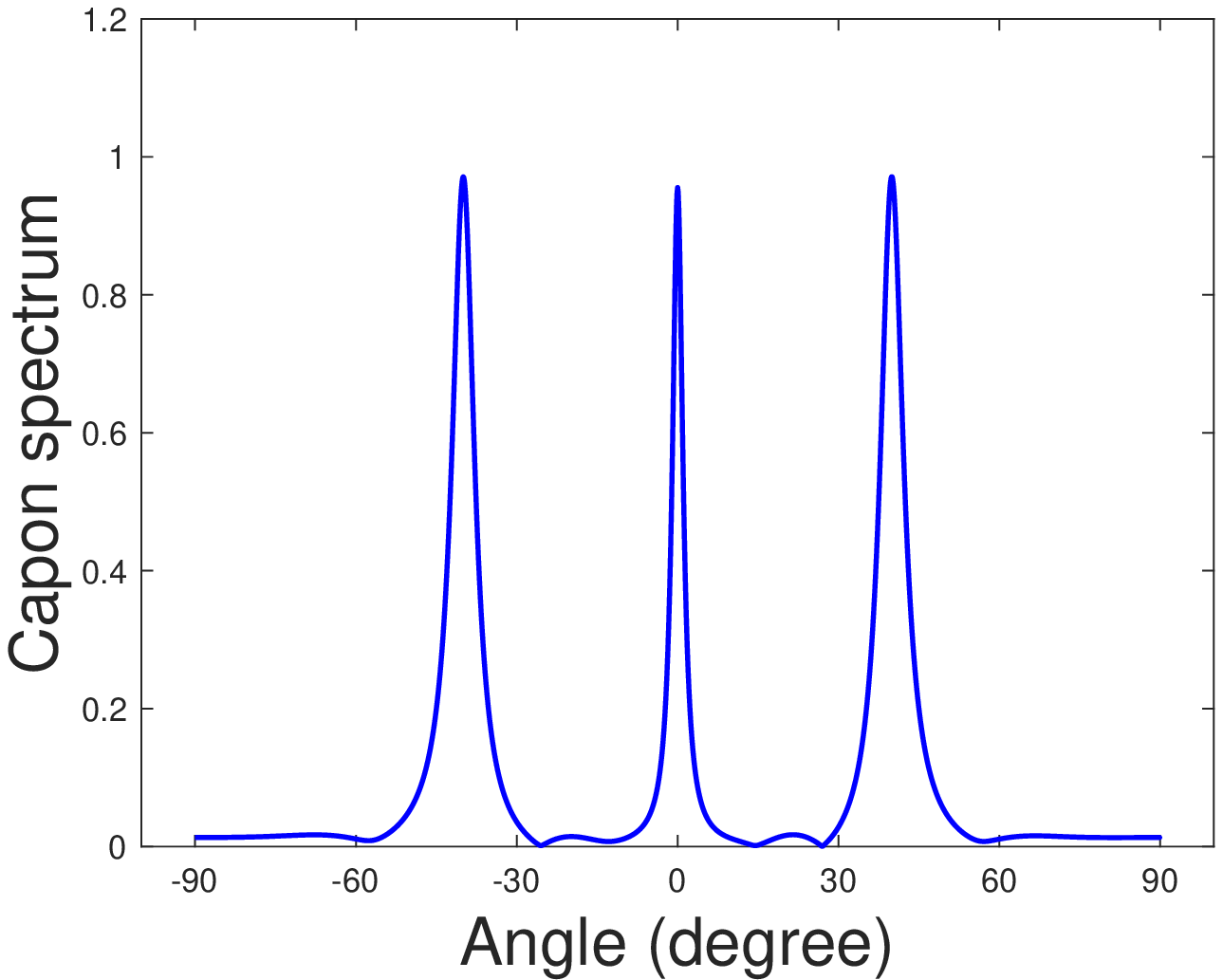}}
		\subfigure[]{
			\label{fig:p21}
			\includegraphics[width=0.48\linewidth]{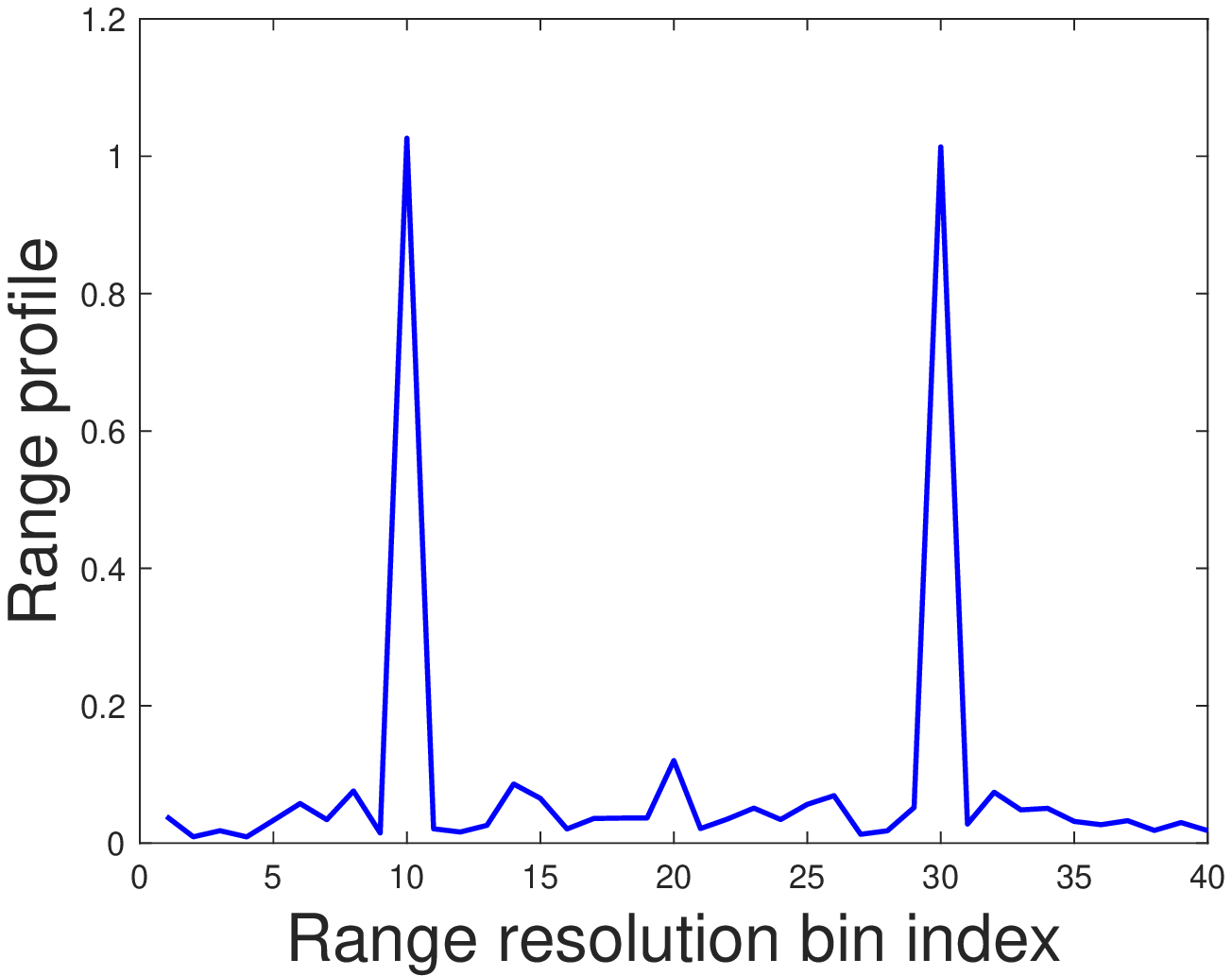}}
		\subfigure[]{
			\label{fig:p22}
			\includegraphics[width=0.48\linewidth]{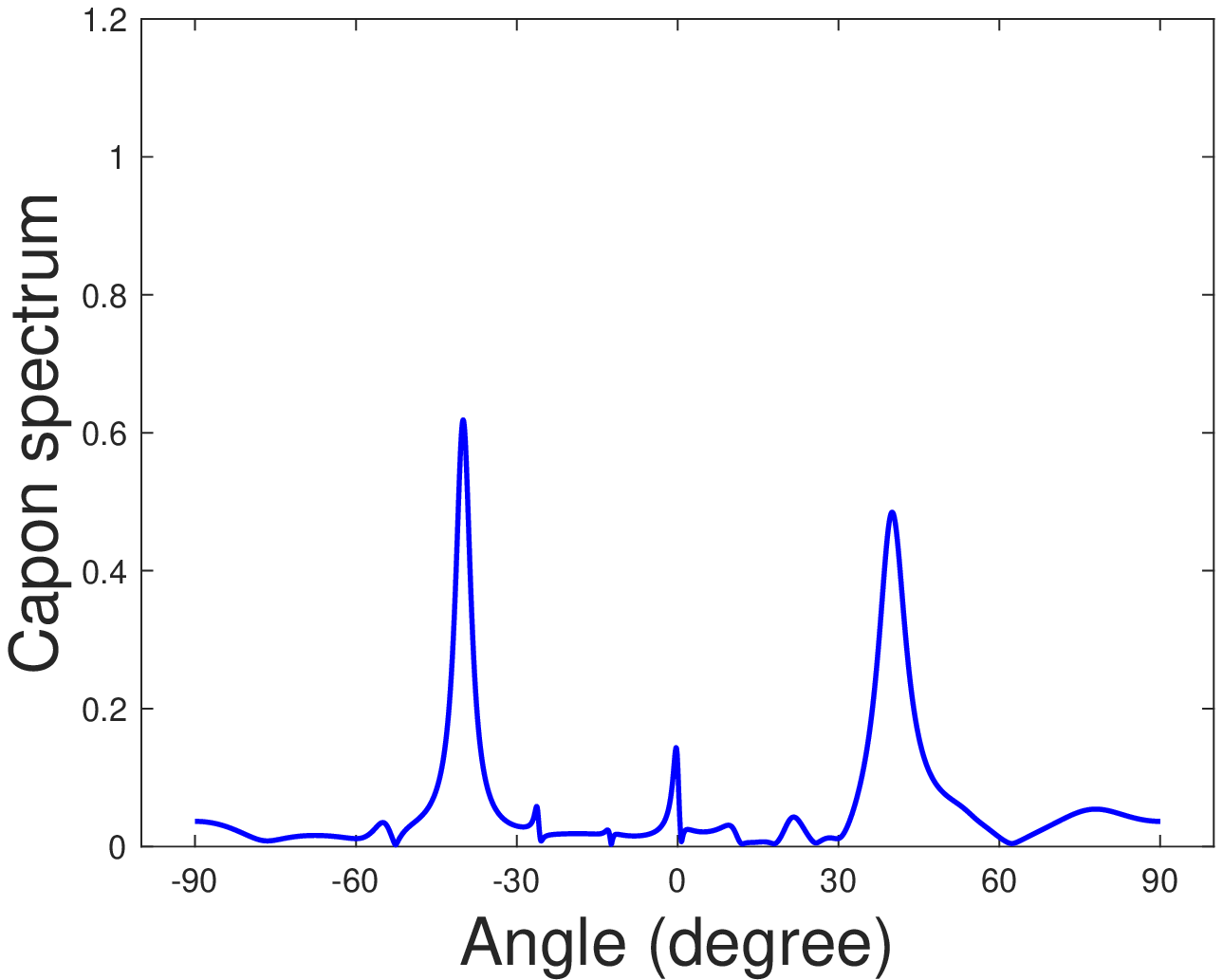}}
		\subfigure[]{
			\label{fig:p31}
			\includegraphics[width=0.48\linewidth]{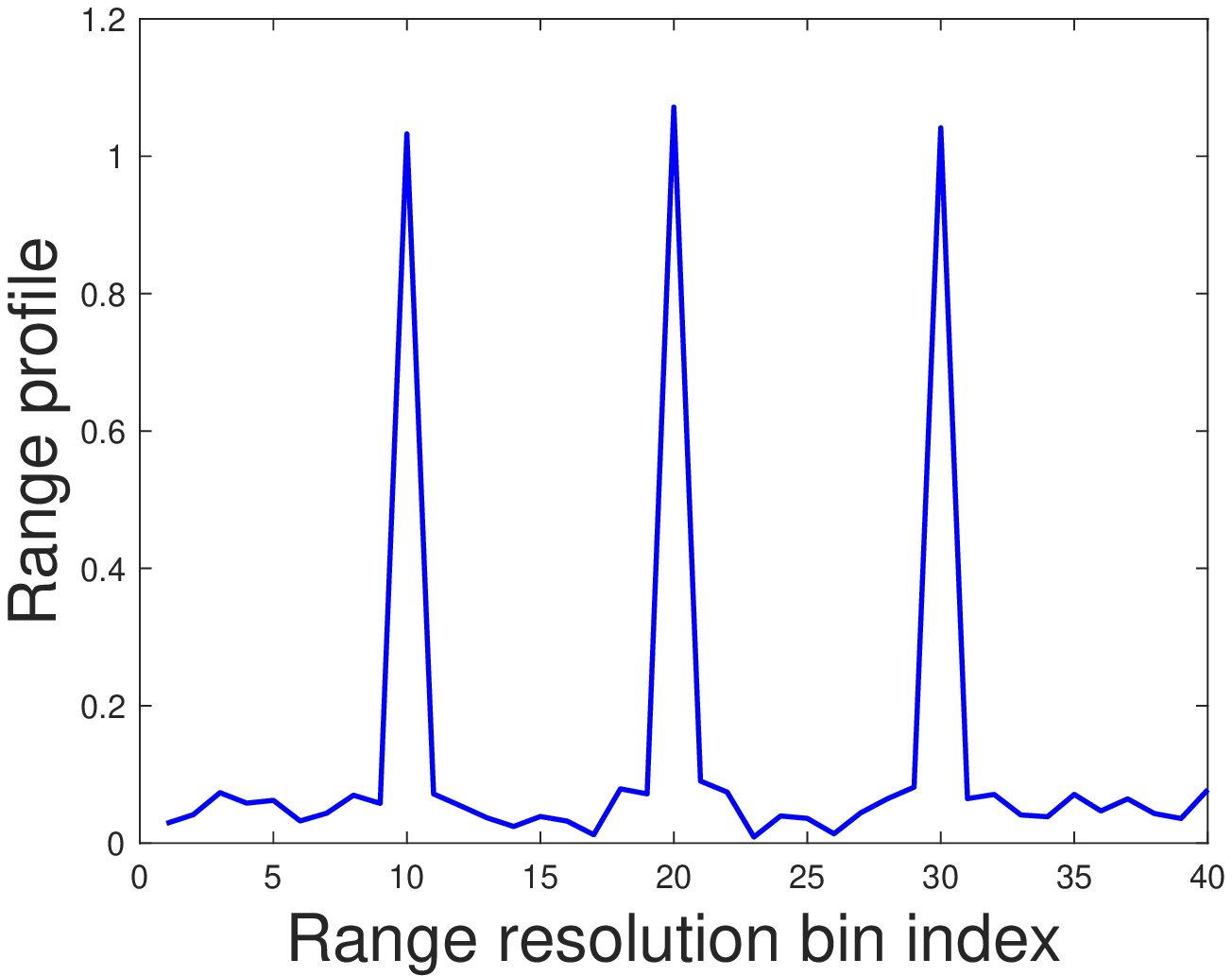}}
		\subfigure[]{
			\label{fig:p32}
			\includegraphics[width=0.48\linewidth]{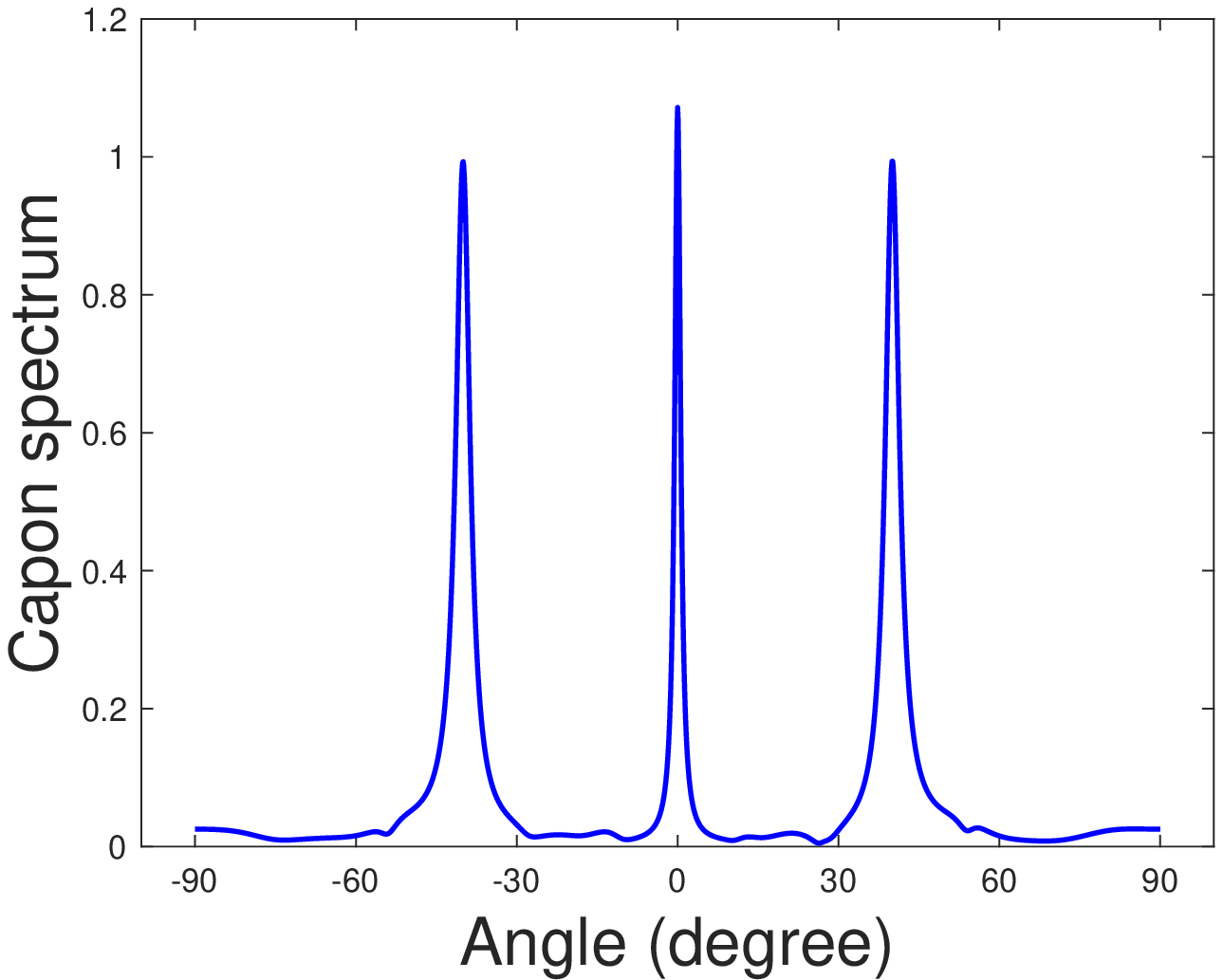}}
		\caption{Capon spatial spectrum at the $20$-th range resolution bin and the range profile at direction $0 ^ \circ$, for $K = 2$ and $\Gamma = 12$ dB. (a) Range profile for the radar-only case. (b) Capon spatial spectrum for the radar-only case. (c) Range profile for the SSP method. (d) Capon spatial spectrum for the SSP method. (e) Range profile for the SDR method. (f) Capon spatial spectrum for the SDR method.}
		\label{fig:range_angle}
		
	\end{figure}

	The second simulation is conducted to evaluate the spatial processing performance of MIMO radar, including the angle estimation accuracy and target detection performance.
	We simulate three radar targets located at directions $\overline{\theta}_1 = -40^\circ$, $\overline{\theta}_2 =  0^\circ$, and $\overline{\theta}_3 =  40^\circ$, respectively.
	These targets are in the same range resolution bin and the complex amplitude of the targets are all $1$. 
	The targets' reflected signal is corrupted with additive noise whose covariance is $\bm R _ v = \sigma_r ^ 2 \bm{I} $.
	The angle of the targets is estimated by finding the peaks of the  Capon spatial spectrum.
	The angle estimation performance  is evaluated by the root-mean-square-error (RMSE), defined as
	\begin{equation} \label{eq-simulation-rmse}
	\mathrm{RMSE} = \sqrt{ \mathbf{E} \left\{ \frac{1}{3} \sum_{p=1}^{3} ( \overline{ \theta}_p - \hat{\theta}_p ) ^ 2  \right\} } ,
	\end{equation}  
	where  $\overline{ \theta}_p $ is the real angle and $ \hat{\theta}_p  $ is the estimated angle for the $p$-th target, for $p = 1,\ldots,3$.
	The generalized  likelihood ratio test proposed in \cite{bekkerman_target_2006} is applied to detect the target.
	To demonstrate the target detection performance, we study the relationship between the detection probability and the transmit SNR given by $P_t N / \sigma_r^2$, under a fixed false alarm probability $P_{fa}$.
	To calculate the detection probability, we ran $1000$ Monte Carlo tests to produce randomized Gaussian noise for each channel realization, and thus the total number of tests is $10^6$.

	The numerically evaluated tradeoff between angle estimation RMSE and achieved fairness SINR for SDR beamforming and the SSP DFRC system of \cite{liu_mu-mimo_2018} is depicted in Fig.~\ref{fig:p7} for $K = 2,4,6$.
	Here the RMSE of the SSP approach for $K=2$ is not evaluated since it frequently fails to detect the targets near the  true angle direction of the targets.  
	The angle estimation RMSE in radar-only case is also displayed for comparison.
	Observing Fig.~\ref{fig:p7}, we note that the angle estimation RMSE tends to increase with the fairness SINR, again indicating that the improved communication performance induces some loss on the radar performance.  
	If $K = 2$, the angle estimation performance of SDR  beamforming is almost identical to the performance in radar-only case, indicating that the proposed DFRC system achieves angle estimation performance close to that of the radar-only scheme.
	The RMSE of angle estimation slightly increases if more communication users are under service. It is also noted that under most considered fairness SINR values, our proposed SDR beamforming achieves improved angle estimation RMSE compared to the SSP method. 
	
	The numerically evaluated detection probability versus transmit SNR for SDR beamforming, SSP DFRC system of \cite{liu_mu-mimo_2018} and the radar-only case is depicted in Fig.~\ref{fig:dr}, for $\Gamma = 12$ dB and $P_{fa} = 10 ^ {-4}$.
	From \cite{liu_mu-mimo_2018}, it is noted that there exists detection performance loss for simultaneous multiuser information transmission compared to the radar-only case.
	If $K=2$, the detection performance of SDR beamforming notably outperforms that of the the SSP approach, because the SSP approach usually cannot provide enough DoF to form three beams to cover the three target.
	Hence, reflected signal from one of the targets may experience notable SNR loss, significantly reducing the detection probability.
	If $K=4,6$, although the detection performance  of  SDR beamforming and the SSP approach is close, we note that the SDR beamforming can achieve better communication  quality. 
	In particular, the SDR beamforming guarantees the achieved fairness SINR is higher than $\Gamma$, while in the SSP approach the SINR is considered as a penalty term in the penalty function and the obtained SINR at users is generally less  than $\Gamma$ in our simulation.

	\begin{figure}
		\centering
		\includegraphics[width=0.9\linewidth]{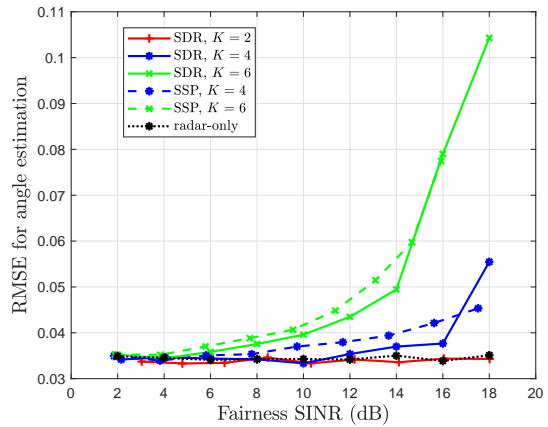}
		\caption{RMSE for angle estimation with LS-Capon method versus SINR threshold $\Gamma$.}
		\label{fig:p7}
	\end{figure}
	
	\begin{figure}
		\centering
		\includegraphics[width=0.9\linewidth]{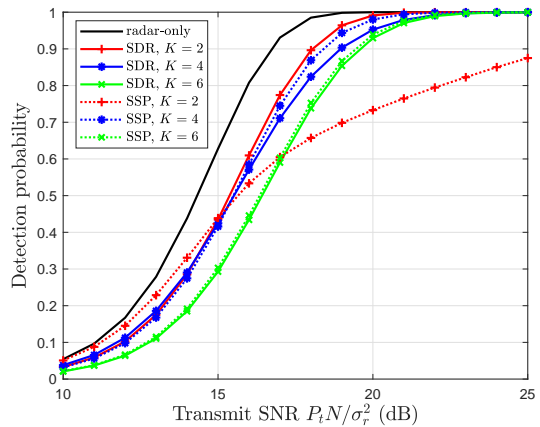}
		\caption{Detection probability versus transmit SNR under false alarm probability $P_{fa} = 10 ^ {-4}$, for $\Gamma = 12$ dB.}
		\label{fig:dr}
	\end{figure}

	\section{Conclusion}
	\label{sec:conclusion}
	In this paper, we proposed two  joint beamforming approaches for MIMO radar and multiuser MIMO communication sharing spectrum and transmit array.
	The precoders of the individual radar waveform and communication symbols are designed to optimize the performance of MIMO radar transmit beamforming while meeting SINR constraints at communication users.
	To solve the proposed optimization problem efficiently, we proposed an algorithm based on SDR, which allows to accurately recover the optimal beamforming scheme. To reduce the computational burden of SDR beamforming, we proposed a reduced complexity sub-optimal strategy based on joint beamforming with zero-forced inter-user and radar interference.
	Simulation results showed that the radar beam patterns and angle estimation performance obtained by the proposed dual-function system is comparable to those of the optimal radar-only scheme. 
	We also observed  the advantage of our method over the previous dual-function work that precodes only multiuser communication symbols from simulations in terms of their inherent radar-communication  tradeoffs. 
	These performance gains are most notable when the number of communication users is relatively small, and particularly less than the MIMO radar DoF required to meet the desired transmit beam pattern.
	

	\begin{appendix}
		\subsection{Proof of Theorem \ref{thm1}}
		\label{app:Proof1}
		Let  $ \hat{ \bm R },\hat{ \bm{R} }_{1}, \ldots, \hat{ \bm{R}}_{K}$ be an arbitrary global optimum  to  \eqref{eq-5-rank}.
		We  prove the theorem by constructing $ \tilde{ \bm R },\tilde{ \bm{R} }_{1}, \ldots, \tilde{ \bm{R}}_{K}$ from $ \hat{ \bm R },\hat{ \bm{R} }_{1}, \ldots, \hat{ \bm{R}}_{K}$ with
		\begin{equation} \label{eq-sdr-construct}
		\tilde{\bm R} = \hat{ \bm R }, \ \tilde{ \bm w }_k = \big(\bm h_k ^ H \hat{ \bm R } _ k \bm h _ k \big) ^ {-1/2} \hat { \bm R } _ k \bm h _ k, \ \tilde{ \bm R }_k = \tilde{ \bm w }_k \tilde{ \bm w }_k ^ H,
		\end{equation}
		for $k = 1,\ldots,K$.
		It is clear that $ \tilde{ \bm{R} }_{1}, \ldots, \tilde{ \bm{R}}_{K}$ are  positive semidefinite and are rank-one.
		
		We now show that $ \tilde{ \bm R },\tilde{ \bm{R} }_{1}, \ldots, \tilde{ \bm{R}}_{K}$  is also a global optimum to \eqref{eq-5-rank}.
		Since the target function $L(\bm R, \alpha)$ is determined by $\bm R$ and $ \hat{ \bm R } = \tilde{ \bm R } $, we only need to validate that $ \tilde{ \bm R },\tilde{ \bm{R} }_{1}, \ldots, \tilde{ \bm{R}}_{K}$ is a feasible solution to \eqref{eq-5-rank}.
		
		First, one can derive   that
		\begin{equation}
		\bm h _ k ^ H \tilde{ \bm R }_k \bm h _ k = \bm h _ k ^ H \tilde{ \bm w }_k \tilde{ \bm w }_k^ H \bm h _ k = \bm h _ k ^ H \hat{ \bm R }_k \bm h _ k
		\end{equation}
		by substituting \eqref{eq-sdr-construct}.
		Thus
		\begin{equation}
		\begin{aligned}
		& \left(1 + \Gamma^{-1} \right)  \bm h _ k ^ H \tilde{ \bm  R} _ k  \bm h _ k =   \left(1 + \Gamma^{-1} \right)  \bm h _ k ^ H \hat{ \bm  R} _ k  \bm h _ k \\
		& \geq \bm h _ k ^ H  \hat { \bm R }   \bm h _ k + \sigma^2  = \bm h _ k ^ H  \tilde { \bm R }   \bm h _ k + \sigma^2,
		\end{aligned}
		\end{equation}
		namely constraint \eqref{eq-5-ranke}  holds for $ \tilde{ \bm R },\tilde{ \bm{R} }_{1}, \ldots, \tilde{ \bm{R}}_{K}$.
		
		Next, we show that $ \hat{\bm R} _ k -  \tilde{ \bm R }_k \in \mathcal{S}_M^+ $. 
		For any $\bm v \in \mathbb{C}^ M$, it holds that
		\begin{equation}
		\bm  v ^ H \big( \hat{\bm R} _ k -  \tilde{ \bm R }_k \big) \bm  v = \bm  v ^ H \hat{\bm R} _ k \bm  v - \big( \bm  h_k ^ H \hat{\bm R} _ k \bm  h _ k  \big) ^ {-1} \left | \bm  v ^ H \hat{\bm R} _ k \bm  h _ k \right | ^ 2.
		\end{equation}
		According to the Cauchy-Schwarz inequality, one has
		\begin{equation}
		\big( \bm  h_k ^ H \hat{\bm R} _ k \bm  h _ k  \big)  \big( \bm  v ^ H \hat{\bm R} _ k \bm  v  \big) \geq   \left | \bm  v ^ H \hat{\bm R} _ k \bm  h _ k \right | ^ 2,
		\end{equation}
		so $  \bm  v ^ H \big( \hat{\bm R} _ k -  \tilde{ \bm R }_k \big) \bm  v \geq 0$ holds for any $\bm v \in \mathbb{C}^ M$, i.e. $ \hat{\bm R} _ k -  \tilde{ \bm R }_k \in \mathcal{S}_M^+ $.
		It therefore follows that
		\begin{displaymath}
		\tilde { \bm{R} } - \sum_{k=1}^{K} \tilde{ \bm R } _ k   = \hat { \bm{R} } - \sum_{k=1}^{K} \hat{ \bm R } _ k + \sum_{k=1}^{K} \big(  \hat{ \bm R } _ k - \tilde{ \bm R } _ k \big) \in \mathcal{S}_M^+ ,
		\end{displaymath}
		namely, the constraint \eqref{eq-5-rankb}  holds for $ \tilde{ \bm R },\tilde{ \bm{R} }_{1}, \ldots, \tilde{ \bm{R}}_{K}$.
		Finally, since  $ \hat{ \bm R } = \tilde{ \bm R }$, \eqref{eq-5-rankc} also holds for $ \tilde {\bm R} $.
		
		With the derivation above, it is verified that  $ \tilde{ \bm R },\tilde{ \bm{R} }_{1}, \ldots, \tilde{ \bm{R}}_{K}$ is  a feasible solution, and furthermore, it is also a global optimum to \eqref{eq-5-rank}, completing the proof. 
		\qed

		\subsection{Proof of Theorem \ref{thm2}}
		\label{app:Proof2}
		When the conditions \eqref{eq-3-23b} and \eqref{eq-3-23d1} hold, it follows that
		\begin{equation}
		\bm{H R H} ^ H = \bm{H W W} ^ H \bm{H} ^ H = \bm F \bm{F} ^ H ,
		\end{equation}
		i.e. \eqref{eq-3-24} holds, proving the necessity.
		
		Next, we prove that condition \eqref{eq-3-24} is also sufficient.
		Assume that condition \eqref{eq-3-24} holds. 
		We will then construct a $\bm{W}$ that satisfies  \eqref{eq-3-23b} and \eqref{eq-3-23d1}.
		To this aim, we recall that the QR decomposition \cite{zhang2017matrix} of a $n \times m$ matrix $\bm{B}$ with $n \geq m$ is defined as
		\begin{equation} \label{eq-qr-1}
		\bm{B} = \bm{P}'_a \bm{U}_a =  \bm{P} _a  \left[
		\begin{array}{c}
		\bm{U}_a \\
		\bm{0}_{(n-m) \times m}
		\end{array}   
		\right],
		\end{equation} 
		where $\bm{U}_a$ is a $m \times m$ upper triangular matrix, $\bm{P}'_a $ is a $n \times m$ matrix with orthogonal unit columns, and $\bm{P}_a$ is a $n \times n$ unitary matrix.
		Then, define the row QR decomposition of a $m \times n$ matrix $\bm{A} = \bm B^T$ with $m \le n$ as
		\begin{equation}
		\bm{A} = \bm{L}_a \bm{Q}'_a = \left[ \bm{L}_a , \bm{0}_{m \times (n - m)}\right] \bm{Q}_a,
		\end{equation}
		where $\bm{L}_a = \bm{U}_a ^ T$ is a $m \times m$ lower triangular matrix, $\bm{Q}'_a  = ( \bm{P}_a') ^ T$ is a $m \times n$ matrix with orthogonal unit rows, and $\bm{Q}_a =  \bm{P}_a ^ T$ is a $n \times n$ unitary matrix, i.e.
		\begin{equation}\label{eq:unitary}
		\bm{Q}_a \bm{Q}_a^H =\bm{Q}_a^H \bm{Q}_a = \bm I_n.
		\end{equation}
		We  note that
		\begin{equation}\label{eq:QRequ}
		\begin{aligned}
		\bm A \bm A^H &= \left[ \bm{L}_a , \bm{0}_{m \times (n - m)}\right] \bm{Q}_a \bm{Q}_a^H \left[
		\begin{array}{c}
		\bm{L}_a^H \\
		\bm{0}_{(n-m) \times m}
		\end{array}   
		\right] \\ & = \bm{L}_a \bm{L}_a^H.
		\end{aligned}
		\end{equation}

		Observing the left hand side of \eqref{eq-3-24}, we proceed by writing the Cholesky decomposition of $\bm{R}$ as
		\begin{equation} \label{eq-3-25}
		\bm{R} = \bm{L}_r \bm{L}_r ^ H,
		\end{equation}
		and writing the row QR decomposition to $\bm{H L}_r$ as
		\begin{equation} \label{eq-3-26}
		\bm{H L}_r =  \left[ \bm{L}_h ,  \bm{0}_{K \times (M-K)} \right] \bm{Q}_h,
		\end{equation} 
		where $\bm{Q}_h$ is a $M \times M$ unitary matrix and $\bm{L}_h$ is a $K \times K$ lower triangular matrix. Applying \eqref{eq:QRequ}, we rewrite the left hand side of \eqref{eq-3-24} as
		\begin{equation}\label{eq:HRHLh}
		\bm{H} \bm{R} \bm{H} ^ H = \bm{H L}_r \bm{L}_r ^ H \bm{H} ^ H = \bm{L}_h \bm{L}_h ^ H.
		\end{equation}
		
		Similarly, applying row QR decomposition to $\bm{F}$ yields
		\begin{equation} \label{eq-3-27}
		\bm{F} =  \left[ \bm{L}_f ,  \bm{0}_{K \times M} \right] \bm{Q}_f,
		\end{equation} 
		and then
		\begin{equation} \label{eq-3-28}
		\bm{F} \bm{F} ^ H = \bm{L}_f \bm{L}_f ^ H,
		\end{equation} 
		according to \eqref{eq:QRequ}. In \eqref{eq-3-27}, $\bm{Q}_f$ is a $(M+K)$-dimension unitary matrix and $\bm{L}_f$ is a $K \times K$ lower triangular matrix.
		
		Here, we note that both $\bm{H} \bm{R} \bm{H} ^ H$ and $\bm{F} \bm{F} ^ H$ are positive definite given that $\bm{F}$  is a full rank $K \times (K+M)$ matrix , indicating that the diagonal elements of $\bm{L}_h$ and $\bm{L}_f$ are all non-zero real numbers. 

		Since $\bm{L}_h$ and $\bm{L}_f$ are lower triangular matrices, we find that \eqref{eq:HRHLh} is the Cholesky decomposition of $\bm{H} \bm{R} \bm{H} ^ H$, and  \eqref{eq-3-28} is the Cholesky decomposition of $\bm{F} \bm{F} ^ H$.
		Since $\bm{H} \bm{R} \bm{H} ^ H =\bm{F} \bm{F} ^ H$ and  the Cholesky decomposition of a positive definite matrix is unique \cite{zhang2017matrix}, we have that
		\begin{equation}\label{eq:LhLf}
		\bm{L}_h = \bm{L}_f,
		\end{equation}
		if we require that the diagonal elements of $\bm{L}_h$ and $\bm{L}_f$ are positive real numbers.
		
		We can now construct the matrix $\bm{W}$ as $	\bm{W} = \bm{L}_r \bm{Q}_w $
		to satisfy \eqref{eq-3-23b}, where $\bm{Q}_w$  is a $M \times (M+K)$ matrix obeying $\bm{Q}_w \bm{Q}_w ^ H = \bm{I}_M$.
		Since we also require $\bm{W}$ to meet \eqref{eq-3-23d1}, the matrix $\bm{Q}_w$ should satisfy that
		\begin{equation}
		\bm{H} \bm{L}_r \bm{Q}_w = \bm{F}.
		\end{equation}
		To this aim, the matrix $\bm{Q}_w$ is constructed as $\bm{Q}_w = \bm{Q}_h ^ H \hat{\bm{Q}}_f$, 
		where the $M \times (M+K) $ matrix $ \hat{\bm{Q}}_f =  \left[\bm{Q}_f^T\right]^T_{{1:M}}$ denotes the first $M$ rows of $\bm{Q}_f$, and satisfies
		\begin{equation}\label{eq:Qfunitary}
		\hat{\bm{Q}}_f \hat{\bm{Q}}_f^H =  \bm I_M
		\end{equation}
		according to \eqref{eq:unitary}. 
		Thus, the matrix $\bm W$ is computed as
		\begin{equation} \label{eq-3-30}
		\bm{W} = \bm{L}_r \bm{Q}_h ^ H \hat{\bm{Q}}_f.
		\end{equation}
		Using \eqref{eq-3-30}, we can calculate $\bm W$ from $\bm R$ and $\bm F$ with $\bm L_r$, $\bm Q_h$ and $\bm Q_f$ obtained by applying  matrix decomposition, i.e. \eqref{eq-3-25}, \eqref{eq-3-26} and \eqref{eq-3-27}, respectively.
		
		To prove \eqref{eq-3-23b} and \eqref{eq-3-23d}, we substitute \eqref{eq-3-30} into these two equations, yielding
		\begin{equation}
		\bm{W} \bm{W} ^ H = \bm{L}_r \bm{Q}_h ^ H \hat{\bm{Q}}_f \hat{\bm{Q}}_f ^ H \bm{Q}_h \bm{L}_r ^ H 
		\stackrel{(a)}{=} \bm{L}_r \bm{L}_r ^ H 
		\stackrel{(b)}{=} \bm{R},
		\end{equation}
		and
		\begin{equation} \label{ap54}
		\begin{aligned}
	&	\bm{H} \bm{W} = \bm{H} \bm{L}_r \bm{Q}_h ^ H \hat{\bm{Q}}_f 
		\stackrel{(c)}{=}   \left[ \bm{L}_h ,  \bm{0}_{K \times (M-K)} \right] \bm{Q}_h \bm{Q}_h ^ H \hat{\bm{Q}}_f \\
		& \stackrel{(d)}{=}  \left[ \bm{L}_h ,  \bm{0}_{K \times (M-K)} \right]  \hat{\bm{Q}}_f
		\stackrel{(e)}{=} \left[ \bm{L}_f ,  \bm{0}_{K \times M} \right] \bm{Q}_f \stackrel{(f)}{=} \bm{F}, 
		\end{aligned}
		\end{equation}    
		respectively, where $(a)$ follows from \eqref{eq:Qfunitary} and $\bm Q_h^H \bm Q_h = \bm I_M$ \eqref{eq:unitary}; $(b)$ stems from \eqref{eq-3-25}; $(c)$ is due to \eqref{eq-3-26}; $(d)$ applies again $\bm Q_h^H \bm Q_h = \bm I_M$; $(e)$ uses \eqref{eq:LhLf}; and $(f)$ follows from \eqref{eq-3-27}. 
		Therefore, the condition \eqref{eq-3-24} is also sufficient, completing the proof.
		\qed
	\end{appendix}

	\bibliographystyle{IEEEtran}
	\bibliography{ref}
	
\end{document}